\newtheorem{theorem}{Theorem}[section]
\newtheorem{lemma}[theorem]{Lemma}
\newtheorem{proposition}[theorem]{Proposition}
\theoremstyle{definition}
\newtheorem{remark}[theorem]{Remark}
\newcommand{\F}{\operatorname{\mathcal{F}}}
\newcommand{\iF}{\operatorname{\mathcal{F}^{-1}}}
\newcommand{\Z}{\mathbf{Z}}
\newcommand{\ErrorFrequency}{\mathcal{E_F}}
\newcommand{\ErrorQuadrature}{\mathcal{E_Q}}
\newcommand{\Error}{\mathcal{E}}
\newcommand{\EF}{\ErrorFrequency}
\newcommand{\EQ}{\ErrorQuadrature}
\newcommand{\E}{\Error}
\newcommand{\est}{\overline {\mathcal{E}}}
\newcommand{\estFrequency}{\est_F}
\newcommand{\estQuadrature}{\est_Q}
\newcommand{\lx}{\mathcal{L}^{X}}
\newcommand{\R}{\mathbb{R}}
\newcommand{\C}{\mathbb{C}}
\newcommand{\damp}{\alpha}
\newcommand{\cgmyY}{Y}
\newcommand{\cgmyM}{\eta_+}
\newcommand{\cgmyG}{\eta_-}
\newcommand{\cgmyC}{K}
\newcommand{\fa}{f_{\damp}}
\newcommand{\fadisc}[1]{f_{\damp, #1}}
\newcommand{\fdisc}[1]{f_{#1}}
\newcommand{\ga}{g_{\damp}}
\newcommand{\ft}[1]{\hat{#1}}
\newcommand{\kft}[1]{\tilde{#1}}
\newcommand{\ift}[1]{\iF\left[#1\right]}
\newcommand{\real}[1]{\mathrm{Re}\left[#1\right]}
\newcommand{\im}[1]{\mathrm{Im}\left[#1\right]}
\newcommand{\step}{\Delta\omega}
\newcommand{\strip}[1]{A_{#1}}
\newcommand{\charExp}[1]{\Psi \parent{{#1}}}
\newcommand{\charFun}[2]{\varphi_{#1} \parent{{#2}}}
\newcommand{\cutoff}{\omega_{\mathrm{\max}}}
\newcommand{\pprob}[1]{}
\newcommand{\expf}[1]{\mathrm{exp}\left ( {#1}\right )}
\newcommand{\expfs}[1]{e^{#1}}
\newcommand{\parent}[1]{\left( {#1} \right)}
\newcommand{\absval}[1]{\left| {#1} \right|}
\newcommand{\sset}[1]{\left\lbrace {#1} \right\rbrace }
\newcommand{\ssum}[2]{\displaystyle\sum\limits_{#1}^{#2}}
\def\d{\textrm{d}}
\newcommand{\nnorm}[2]{\absval{\absval{{#1}}}_{#2}}
\newcommand{\expp}[1]{\mathrm{E} \parent{{#1}}}
\newcommand{\kaustaddress}[0]{King Abdullah University of Science and Technology, Thuwal, Kingdom of Saudi Arabia}
\newcommand{\indicator}{\mathbf{1}}
\newcommand{\levy}{L\'evy }
\newcommand{\rz}{\R \setminus \sset{0}}
\newcommand{\linf}[1]{L^\infty_{#1}}
\newcommand{\norminf}[2]{\left\|#1\right\|_{\linf{#2}}}
\newcommand{\cl}{C \parent{\lambda}}
\title[FFT and PIDEs]{Error analysis in Fourier methods for option pricing}
\author{Fabi\'an Crocce}
\address{\kaustaddress}
\email{fcrocce@gmail.com}
\author{Juho H\"app\"ol\"a}
\address{\kaustaddress}
\email{juho.happola@iki.fi}
\author{Jonas Kiessling}
\address{Jonas' address}
\email{jonkie@kth.se}
\author{Ra\'ul Tempone}
\address{\kaustaddress}
\email{raul.tempone@kaust.edu.sa}
\date{\today}    
\begin{document}

\begin{abstract}
We provide a bound for the error committed when using 
a Fourier method to price European options when the 
underlying follows an exponential \levy dynamic.
The price of the option is described by a
partial integro-differential
equation (PIDE). Applying a Fourier transformation to the PIDE 
yields an ordinary differential equation (ODE) that can be solved
analytically in terms of the characteristic exponent of the
\levy process. 
Then, a numerical inverse Fourier transform allows us 
to obtain the option price. 
We present a bound for the error and use this bound
to set the parameters for the numerical method. We analyse the
properties of the bound and demonstrate the minimisation of the bound
to select parameters for a numerical Fourier transformation method
to solve the option price efficiently.
\end{abstract}

\maketitle
%\tableofcontents

\section{Introduction}
\label{section:Introduction}

%\subsection{Levy processes and finances}
\levy processes form a rich field within mathematical
finance. They allow modelling of asset prices with possibly discontinuous dynamics. An early and probably the best known model involving
a \levy process is
the \cite{Merton1976125} model, which generalises the \cite{black1973pricing} model. More recently, we have seen more complex models allowing
for more general dynamics of the asset price. Examples of such models include
the \cite{kou2002jump} model (see also \cite{Dotsis20073584}), the Normal Inverse Gaussian
model (\cite{barndorff1997normal, rydberg1997normal}),
the Variance Gamma model (\cite{madan1990variance, madan1998variance}),
and the Carr-Geman-Madan-Yor (CGMY) model
(\cite{cgmy_fine,carr2003stochastic}). For a good exposition on jump processes in finance we refer to \cite{conttankov}
(also see \cite{raible2000levy} and \cite{eberlein2001application}).

%\subsection{European option PIDE}
Prices of European options whose underlying asset is driven
by the \levy process are solutions to partial integro-differential
Equations (PIDEs) (\cite{nualart2001backward,briani_chioma_natalini,almendral2005numerical,kiessling2011diffusion})
that generalise
the Black-Scholes equation by incorporating a
non-local integral term to account for the discontinuities in
the asset price. This approach has also been extended to
cases where the option price features path dependence,
for instance in \cite{boyarchenko2002barrier} \cite{d2004penalty} and \cite{lord2008fast}.

%\subsection{Fourier methods}
The \levy-Khintchine formula provides an explicit 
representation of the characteristic function of a 
\levy process (cf, \cite{tankov2004financial}).
As a consequence, one can derive an exact expression for the 
Fourier transform of the solution of the relevant PIDE.
Using the inverse fast Fourier transform (iFFT) method, one
may efficiently compute the option price for a range of asset prices
simultaneously.
Furthermore, in the case of European call options, one may
use the duality property presented by \cite{dupire1997pricing}
and iFFT to efficiently compute option prices for a wide range of strike prices.

%\subsection{Need for an error bound and state of the art (Lee)}
Despite the popularity of Fourier methods for option pricing, few works
can be found on the error analysis and related parameter selection for these methods. 
A bound for the error not only provides an interval for the precise
value of the option, but also suggests a method to select the parameters of the numerical method.
An important
work in this direction is the one by \cite{lee2004option} in which several
payoff functions are considered for a rather general set of models, whose
characteristic function is assumed to be known.
\cite{fenglinetsky} presents the framework and theoretical approach
for the error analysis, and establishes polynomial convergence rates
for approximations of the option prices.
For a more
contemporary review on the error committed in various
FT-related methods we refer the reader to \cite{boyarchenko2011new},
that extends the classical \emph{flat} Fourier methods by deforming the integration countours
on the complex plane, studying discretely monitored barrier options
studied in \cite{de2014pricing}.

%\subsection{Our contribution and comparison with Lee}
In this work, we present a methodology for studying
and bounding  the error committed
when using FT methods to compute option prices.
We also provide a systematic way of choosing 
the parameters of the numerical method, in a way that minimises
the strict error bound, thus guaranteeing adherence to a pre-described
error tolerance. 
We focus on exponential \levy processes that may be of either
diffusive or pure jump. Our contribution is to derive a strict error bound for a Fourier
transform method when pricing options under risk-neutral 
\levy dynamics. We derive a simplified bound that separates the contributions of
the payoff and of the process in an easily processed and extensible
product form that is independent of the asymptotic behaviour of the
option price at extreme prices and at strike parameters.
We also provide a proof for the existence of optimal
parameters of the numerical computation that minimise the presented
error bound.
When comparing our work with Lee's work we find that Lee's work is more general than ours in that he studies a wider range
of processes, on the other hand, our results apply to a larger class of payoffs.
On test examples of practical relevance, we also find that the bound presented produces comparable or better results than the ones previously presented in the literature, with acceptable computational cost.

%\subsection{Organization of the paper}
The paper is organised in the following sections:
In Section \ref{section:Method}
we introduce the PIDE setting in the context of
risk-neutral asset pricing; we show the Fourier representation of the relevant
PIDE for asset pricing with \levy processes and use that
representation for derivative pricing.
In Section \ref{section:Bound}
we derive a representation
for the numerical error and divide it into
quadrature and cutoff contributions. We also describe the
methodology for choosing numerical
parameters to obtain minimal error bounds
for the FT method.
The derivation is supported by numerical examples
using relevant test cases with both diffusive and
pure-jump \levy processes in Section \ref{section:Numerics}.
Numerics are followed by conclusions in Section \ref{section:Conclusion}.

\section{Fourier method for option pricing}

\label{section:Method}
Consider an asset whose price at time $t$ is modelled 
by the stochastic process $S=(S_t)$ defined by $S_t=S_0 \expfs{X_t}$,
where $X=(X_t) \in \R$ is assumed to be
a \levy process whose jump measure $\nu$ satisfies
\begin{align} \label{eq:nuhyp}
\int_{\rz} \min\{y^2,1\} \nu \parent{dy} < \infty
\end{align}

Assuming the risk-neutral dynamic for $S_t$, the price at time $t=T-\tau$ of a European option
with payoff $G$ and maturity time $T$ is given by
\begin{align*}
\Pi(\tau,s)=\expfs{-r\tau} \expp{G\parent{S_T}|S_{T-\tau}=s}
\end{align*}
where $r$ is the short rate that we assume to be constant and $\tau\colon 0\leq\tau\leq T$ is the time to maturity. 
Extensions to non-constant deterministic short rates are straightforward.

The infinitesimal generator of a \levy
process $X$ is given by \citep[see][]{applebaum2004levy}
\begin{align}
\lx f(x)&\equiv 
\lim_{h\to 0} \frac{\expp{f(X_{t+h})|X_t=x}-f(x)}h \notag \\
&= \gamma f'(x)+\frac12\sigma^2f''(x)+\int_{\rz} \left(f(x+y)-f(x)-y1_{|y|\leq 1} f'(x)\right)\nu(\d y)
\end{align}
where $(\gamma,\sigma^2,\nu)$ is the characteristic triple of the \levy process.
The risk-neutral assumption on $(S_t)$ implies
\begin{equation} \label{eq:riskneutralexpint}
\int_{\absval{y}>1}\expfs{y} \nu(\d y)<\infty
\end{equation}
and fixes the drift term (see \cite{kiessling2011diffusion}) $\gamma$ of the \levy process to
% TODO: find more references to this
\begin{align} 
\label{drift_equation}
\gamma = r-\frac12 \sigma^2 - \int_{\rz} \parent{\expfs{y}-1-y1_{|y|\leq 1} }\nu(\d y)
\end{align}
Thus, the infinitesimal generator of $X$ may be written under the risk-neutral assumption as
\begin{align}
\lx f(x)&= \left(r-\frac{\sigma^2}2\right) f'(x)+\frac{\sigma^2}2 f''(x)+\int_{\rz} \left(f(x+y)-f(x)-(\expfs{y}-1)f'(x)\right)\nu(\d y)
\end{align}
Consider $g$ as the reward function in log prices (ie, defined by 
$g(x)=G(S_0 \expfs{x})$). Now, take $f$ to be defined as
\begin{align*}
f \parent{\tau,x} \equiv \expp{g \parent{X_{T}}|X_{T-\tau}=x}
\end{align*}
Then $f$ solves the following PIDE:
\begin{equation*}
\begin{cases}
\partial_{\tau}f(\tau,x)&=\lx f(\tau,x) \\
f(0,x)&=g(x), \qquad \qquad \parent{\tau,x}  \in [0,T] \times \R
\end{cases}
\end{equation*}
Observe that $f$ and $\Pi$ are related by
\begin{equation}
\Pi(\tau,S_0 \expfs{x})=\expfs{-r \tau}f(\tau,x)
\end{equation}
Consider a damped version of $f$ defined by $\fa(\tau,x)=\expfs{-\damp x}f(\tau,x)$; we see that $\partial_{\tau} \fa = \expfs{-\damp x} \lx f\parent{\tau,x} $.

There are different conventions for the Fourier transform. Here we consider the operator $\F$ such that 
\begin{align}
\F[f] \parent{\omega} \equiv \int_{\R}\expfs{i\omega x} f \parent{x} \d x
\label{eq:fourierDefinition}
\end{align}
defined for functions $f$ for which the previous integral is convergent. We also use $\ft{f} \parent{\omega}$ as a shorthand notation of $\F[f]\parent{\omega}$.
To recover the original function $f$, we define the inverse Fourier transform as 
$$\ift{f}(x)=\frac{1}{2\pi}\int_{\R} \expfs{-i\omega x}f(\omega) \d \omega$$
We have that $\ift{\ft{f}}(x)=f(x)$.

Applying $\F$ to $\fa$ we get $\ft{\fa}(\omega)=\ft{f} \parent{\omega + i \damp}$.
Observe also that the Fourier transform applied to $\lx f(\tau,x)$ gives $\charExp{-i\omega}\ft{f} \parent{\tau,\omega}$,
where $\charExp{\cdot}$ is the characteristic exponent of the process $X$, which satisfies $\expp{\expfs{zX_t}}=\expfs{t\charExp{z}}$.
The explicit expression for $\charExp{\cdot}$ is 
\begin{equation}
\label{characteristicDefinition}
\charExp{z}
=\parent{r-\frac{\sigma^2}2} z +\frac{\sigma^2}2 z^2 +\int_{\R} \left(\expfs{z y}-1-(\expfs{y}-1)z\right)\nu(\d y) \\[.5em]
\end{equation}
From the previous considerations it can be concluded that 
\begin{equation}
\partial_{\tau} \ft{\fa} = \charExp{\damp -i \omega} \ft{f} \parent{\omega - i \damp}
\end{equation}
Now $\ft{f}\parent{\omega - i \damp} = \ft{\fa} \parent{\omega}$ so $\ft{\fa}$ satisfies the following ODE
\begin{equation}
\begin{cases}
\frac{\partial_{\tau}\ft{\fa}\parent{\tau,\omega}}{\ft{\fa}(\tau,\omega)}&=\charExp{\damp -i \omega}\\[.5em]
\ft{\fa} \parent{0,\omega}&=\ft{\ga}\parent{\omega}
\end{cases}
\end{equation}

Solving the previous ODE explicitly, we obtain
\begin{equation}
\ft \fa \parent{\tau,\omega}
= \expfs{\tau \charExp{\damp-i \omega} } \ft{\ga} \parent{\omega}
\label{eq:theODE}
\end{equation}
Observe that the first factor in the right-hand side in the above 
equation is $\expp{\expfs{\parent{\damp-i \omega}X_{\tau}}}$, (ie, $\charFun{1}{-i\damp-\omega}$), where $\charFun{\tau}{\cdot}$ denotes the characteristic function of the
random variable $X_{\tau}$
\begin{align}
\charFun{\tau}{\omega} \equiv \expp{\tau \charExp{i \omega}} 
\end{align}

Now, to obtain the value function we employ
the inverse Fourier transformation, to obtain
\begin{equation}
\fa(\tau,x)=\ift{\ft{\fa}}(\tau,x)=\frac{1}{2\pi}\int_{\R} \expfs{-i\omega x} \ft{\fa}(\tau,\omega)d\omega
\label{eq:fadefinition}
\end{equation}
or
\begin{equation}
\fa(\tau,x)=\frac{1}{\pi}\int_{0}^{+\infty} \real{\expfs{-i\omega x} \ft{\fa}(\tau,\omega)}d\omega
\label{eq:fadefinition2}
\end{equation}

As it is typically not possible to compute the inverse Fourier transform analytically,
we approximate it by
discretising and truncating the integration domain
using trapezoidal quadrature \eqref{eq:fadefinition}.
Consider the following approximation:
\begin{align}
\label{eq:fdiscrete}
 \fadisc{\step,n} (\tau,x)&= \frac{\step}{2\pi} \sum_{k=-n}^{n-1} \expfs{-i{\parent{k+\frac12}\step} x} \ft{\fa}\parent{\tau,{\parent{k+\frac12}\step}}\\
 &=\frac{\step}{\pi} \sum_{k=0}^{n-1} \real{\expfs{-i{\parent{k+\frac12}\step} x} \ft{\fa}\parent{\tau,{\parent{k+\frac12}\step}}}
\end{align} 
Bounding and consequently minimising the error
in the approximation of $f(\tau,x)$ by $$\fdisc{\step,n}(\tau,x) \equiv \expfs{\damp x}\fadisc{\step,n}(\tau,x)$$
is the main focus of this paper and will be addressed in the following section.

\begin{remark}
Although we are mainly concerned with option pricing when 
the payoff function can be damped in order to guarantee
regularity in the $L^1$ sense, we note here that our main 
results are naturally extendable to include the Greeks of the option. 
Indeed, we have by \eqref{eq:theODE} that 
\begin{align}
 f \parent{t,x} = \frac{1}{2 \pi} \int_{\mathbb R} \expfs{\parent{\damp - i \omega}x} \ft \fa \parent{\tau, \omega} \d \omega
\end{align}
so the Delta and Gamma of the option equal
\begin{align}
\Delta \parent{t,x} &\equiv \frac{\partial f \parent{t,x}}{\partial x} = \frac{1}{2 \pi} \int_{\mathbb R} \parent{\damp-i \omega} \expfs{\parent{\damp - i \omega}x} \ft \fa \parent{\tau, \omega} \d \omega
\\
\Gamma \parent{t,x} &\equiv \frac{\partial^2 f \parent{t,x}}{\partial x^2} = \frac{1}{2 \pi} \int_{\mathbb R} \parent{\damp-i \omega}^2 \expfs{\parent{\damp - i \omega}x} \ft \fa \parent{\tau, \omega} \d \omega
%\\
%\Theta \parent{t,x}  &\equiv \frac{\partial f \parent{t,x}}{\partial t} = -\frac{1}{2 \pi} \int_{\mathbb R} \charExp{\damp-i \omega} \expfs{\parent{\damp - i \omega}x} \ft \fa \parent{\tau, \omega} d\omega
\end{align}
Because the expressions involve partial derivatives with respect to only $x$,
the results in this work are applicable for the computation of $\Delta$ and $\Gamma$
through a modification of the payoff function:
\begin{align}
\ft g_{\damp, \Delta} \parent{\omega} =& \ft g_{\damp} \parent{\omega} \parent{\damp - i \omega}
\\
\ft g_{\damp, \Gamma} \parent{\omega} =& \ft g_{\damp} \parent{\omega} \parent{\damp-i \omega}^2
\end{align}
When the Fourier space payoff function manifests
exponential decay, the introduction of a coefficient that is polynomial in $\omega$
does not change the regularity of $\ft g$ in a way that would significantly change the following
analysis. Last, we note that since we do our analysis for PIDEs on a mesh of
$x$'s, one may also compute the option values in one go and obtain the Greeks with
little additional effort using a finite difference approach for the derivatives.
\end{remark}

\subsection{Evaluation of the method for multiple values of $x$ simultaneously}

The Fast Fourier Transform (FFT) algorithm provides an efficient 
way of computing \eqref{eq:fdiscrete} for an equidistantly
spaced mesh of values for $x$ simultaneously.
Examples of works that consider this widely extended tool are \cite{lord2008fast,jackson2008fourier,hurd2010fourier} and \cite{schmelzle2010option}.
%In Subsection \ref{subsection:FFT} we analyse this alternative.

Similarly, one may define the Fourier frequency $\omega$
as the conjugate variable of some external parameter on which
the payoff depends. Especially, for the practically relevant case
of call options, we can denote the log-strike as $k$ and treat
$x$ as a constant and write:
\begin{align}
\kft f_{k,\damp} \parent{\omega} \equiv
\int_\R \expfs{\parent{\damp +i \omega }k } f_k \parent{x} \d k
\end{align}

Using this convention, the time dependence is given by
\begin{align}
\label{eq:kftDef}
\kft f_{k,\damp} \parent{\tau,x} = 
\frac{\expfs{\parent{i \omega + \damp + 1} x} \charFun{\tau}{\omega - i \parent{\damp +1} }}{\parent{i \omega + \damp} \parent{i \omega + \damp + 1}}
\end{align}
contrasted with the $x$-space solution
\begin{align}
\label{eq:xftDef}
\ft f_{k,\damp} \parent{\tau,x} = 
\frac{\expfs{\parent{i \omega - \damp + 1} k} \charFun{\tau}{\omega + i \damp }}{\parent{i \omega + \damp} \parent{i \omega + \damp + 1}}
\end{align}
We note that for call option payoff to be in $L^1$,
we demand that $\damp$ in \eqref{eq:kftDef} is positive.
Omitting the exponential factors that contain the $x$ and $k$
dependence in \eqref{eq:kftDef} and \eqref{eq:xftDef} respectively,
we have that one can arrive from \eqref{eq:kftDef} to \eqref{eq:xftDef}
using the mapping $\damp \mapsto -\damp-1 $. Thanks to this,
much of the analysis regarding the $x$-space transformation
generalises in a straightforward manner to the $k$-space transform.

\section{Error bound}
\label{section:Bound}

The aim of this section is to compute a bound of the error when approximating 
the option price $f(\tau,x)$ by $\fadisc{\step,n} (\tau,x)$,
defined in \eqref{eq:fdiscrete}.
Considering
\begin{equation} \label{eq:serie}
\fadisc{\step}(\tau,x)= \frac{\step}{2\pi} \sum_{k\in \Z} \expfs{-i{\parent{k+\frac12}\step} x} \ft{\fa}\parent{\tau,{\parent{k+\frac12}\step}}
\end{equation}
the total error $\Error$ can be split into a sum of two terms: the quadrature and truncation errors.
The former is the error from the approximation of the integral in \eqref{eq:fadefinition}
by the infinite sum in \eqref{eq:serie}, while the latter is due to the truncation of the infinite sum. 
Using triangle inequality, we have
\begin{align} \label{eq:errordef}
\Error &:=\absval{f(\tau,x)-\fdisc{\step,n}(\tau,x)}\leq \EQ + \EF
\end{align}
with
\begin{align*}
\EQ &= \expfs{\damp x}\absval{\fa(\tau,x)-\fadisc{\step}(\tau,x)}\\
\EF &= \expfs{\damp x} \absval{\fadisc{\step}(\tau,x)-\fadisc{\step,n}(\tau,x)}
\end{align*}
Observe that each $\Error$, $\EQ$ and $\EF$ depend on
three kinds of parameters:
\begin{itemize}
 \item Parameters underlying the model and payoff such as volatility and strike price. We call these \emph{physical parameters.}
 \item Parameters relating to the numerical scheme such as $\damp$ and $n$.
 \item \emph{Auxiliary parameters} that will be introduced in the process of deriving the error bound. These parameters
 do not enter the computation of the option price, 
 but they need to be chosen appropriately to have as tight a bound as possible. 
\end{itemize}

We start by analysing the quadrature error.
\subsection{Quadrature error}

Denote by $\strip{a}$, with $a>0$, the strip of width $2a$ around the real line:
$$\strip{a} \equiv \left\{z\in \C \colon \absval{\im{z}}<a \right\}$$
The following theorem presents conditions under which the quadrature error goes to zero at a spectral rate as $\step$ goes to zero.
Later in this section, we discuss simpler conditions to verify the hypotheses and analyse in more detail the case 
when the process $X$ is a diffusive process or there are ``enough small jumps."

\begin{theorem}\label{theo:quadrature} 
Assume that for $a > 0$:
\begin{itemize}
\item[H1.] the characteristic function of the random variable $X_1$ has an analytic extension to the set $$\strip{a}-\damp i \equiv \left\{z\in \C \colon \absval{\im{z}+\damp}<a\right\}$$
\item[H2.] the Fourier transform of $\ga(x)$ is analytic in the strip $\strip{a}$ and
\item[H3.] there exists a continuous function $\gamma \in L^1(\R)$ such that  $\absval{\ft{\fa}(\tau,\omega+i \beta)}<\gamma(\omega)$ for all $\omega\in \R$ and for all $\beta \in [-a,a]$
\end{itemize}
Then the quadrature error is bounded by
\begin{equation*}
\ErrorQuadrature \leq \expfs{\damp x} \frac{M_{\damp,a}(\tau,x) }{2\pi\parent{\expfs{2\pi a/\step}-1}}
\end{equation*}
where $M_{\damp,a}(\tau,x)$ is given by
\begin{align}\label{eq:defM}
M_{\damp,a}(\tau,x)&:=\sum_{\beta \in \{-a,a\}}\int_{\R}\absval{\expfs{-i\parent{\omega + i\beta} x}\ft{\fa}(\tau,\omega+i \beta)} \d \omega
\end{align}
$M_{\damp,a}(\tau,x)$ equals the Hardy norm (defined in \eqref{eq:hardynorm}) of the function $\omega \mapsto \expfs{-i\parent{\omega + i\beta} x}\ft{\fa}(\tau,\omega+i \beta)$, which
is finite.
\end{theorem}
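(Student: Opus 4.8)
The plan is to read $\fadisc{\step}$ off \eqref{eq:serie} as a shifted-grid (midpoint) quadrature of the inverse transform \eqref{eq:fadefinition}, and to turn the quadrature error into an exponentially small series by contour deformation. Set
\begin{equation*}
F(\omega) := \frac{1}{2\pi}\expfs{-i\omega x}\ft{\fa}(\tau,\omega),
\end{equation*}
so that $\fa(\tau,x)=\int_{\R}F(\omega)\d\omega$ and $\fadisc{\step}(\tau,x)=\step\sum_{k\in\Z}F\!\parent{(k+\tfrac12)\step}$. First I would check that H1--H3 make $F$ admissible: the factorisation $\ft{\fa}(\tau,\omega)=\expfs{\tau\charExp{\damp-i\omega}}\ft{\ga}(\omega)$ from \eqref{eq:theODE} combines H1 (analyticity of the characteristic function on $\strip{a}-\damp i$) and H2 (analyticity of $\ft{\ga}$ on $\strip{a}$) to give an analytic extension of $F$ to the whole strip $\strip{a}$, while H3 furnishes an $L^1$ bound $\absval{\ft{\fa}(\tau,\omega+i\beta)}\le\gamma(\omega)$ uniform for $\beta\in[-a,a]$. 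These are exactly the hypotheses the subsequent steps consume.

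The core step is the Poisson summation formula for the half-shifted lattice, which yields
\begin{equation*}
\fadisc{\step}(\tau,x)-\fa(\tau,x)=\sum_{k\neq 0}(-1)^{k}c_{k},\qquad c_{k}:=\int_{\R}F(\omega)\expfs{-2\pi i k\omega/\step}\d\omega,
\end{equation*}
where the $k=0$ term equals $\int_{\R}F=\fa(\tau,x)$ and cancels; the $L^1$ control from H3 legitimises the interchange of sum and integral and the convergence of the series. I would then estimate each coefficient by displacing the contour in $c_{k}$ off the real axis, downward to $\im{\omega}=-a$ when $k>0$ and upward to $\im{\omega}=+a$ when $k<0$. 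Analyticity on $\strip{a}$ makes the shift admissible, H3 forces the vertical end segments at $\real{\omega}\to\pm\infty$ to vanish, and the exponential in the integrand supplies the decay, so that
\begin{equation*}
\absval{c_{k}}\le\expfs{-2\pi\absval{k}a/\step}\int_{\R}\absval{F\!\parent{\omega-i\,\mathrm{sgn}(k)\,a}}\d\omega.
\end{equation*}

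To conclude, I would sum the geometric series $\sum_{k\ge 1}\expfs{-2\pi k a/\step}=\parent{\expfs{2\pi a/\step}-1}^{-1}$ and note that collecting the $k>0$ integrals (on $\im{\omega}=-a$) together with the $k<0$ integrals (on $\im{\omega}=+a$) reproduces exactly
\begin{equation*}
\int_{\R}\absval{F(\omega-ia)}\d\omega+\int_{\R}\absval{F(\omega+ia)}\d\omega=\frac{1}{2\pi}\,M_{\damp,a}(\tau,x)
\end{equation*}
by the definition \eqref{eq:defM}; multiplying through by the damping factor $\expfs{\damp x}$ that defines $\EQ$ then delivers the stated bound, and finiteness of $M_{\damp,a}(\tau,x)$ is immediate from H3 since $\absval{\expfs{-i(\omega+i\beta)x}}=\expfs{\beta x}$ is bounded for $\beta\in\{-a,a\}$ and $\gamma\in L^1(\R)$. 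The main obstacle is the careful justification of these two analytic manoeuvres under precisely H1--H3 rather than under stronger global assumptions: one must verify that the continuation of $F$ is single-valued and pole-free throughout $\strip{a}$ (so that no residues are collected) and that $\gamma$ genuinely controls the corner and vertical contributions uniformly. An equivalent route that sidesteps Poisson summation is to integrate $F$ against a suitably normalised tangent kernel proportional to $\tan\!\parent{\pi\omega/\step}$, whose simple poles sit precisely at the midpoints $(k+\tfrac12)\step$, around a rectangle of height $2a$; the same analyticity and decay hypotheses control the horizontal sides and return the identical factor $\parent{\expfs{2\pi a/\step}-1}^{-1}$.
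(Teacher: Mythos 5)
Your argument is correct and yields exactly the stated constant, but it follows a genuinely different route from the paper. The paper's proof is essentially a one-line reduction to Stenger's Theorem 3.2.1 (quoted as Lemma \ref{lem:stenger}): it observes that $w(z)=\expfs{-ix\parent{z+\step/2}}\ft{\fa}(\tau,z+\step/2)$ is analytic in $\strip{a}$ by H1--H2 (absorbing the midpoint offset into the function), uses H3 and dominated convergence to identify $\nnorm{w}{H^1_{A_a}}$ with $M_{\damp,a}(\tau,x)$, and reads off the bound from $e^{-\pi a/h}/\parent{2\sinh(\pi a/h)}=\parent{\expfs{2\pi a/h}-1}^{-1}$. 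You instead reprove the content of that lemma from scratch for the half-shifted lattice: Poisson summation turns the quadrature error into $\sum_{k\neq0}(-1)^k\hat F(2\pi k/\step)$, contour displacement to $\im{\omega}=\mp a$ gives $\absval{c_k}\leq \expfs{-2\pi\absval{k}a/\step}\int_\R\absval{F(\omega\mp ia)}\d\omega$, and the geometric series reproduces the factor $\parent{\expfs{2\pi a/\step}-1}^{-1}$ with the two boundary integrals summing to $M_{\damp,a}/(2\pi)$; your alternative tangent-kernel contour integral is in fact essentially Stenger's own proof of the cited lemma. What your route buys is self-containedness and transparency about where each hypothesis enters; what it costs is that the analytic bookkeeping you flag yourself (validity of pointwise Poisson summation, absolute convergence of the lattice sum, pushing the contour all the way to the boundary $\pm ia$ rather than to $\pm i(a-\varepsilon)$ followed by a limit as in the Hardy-norm definition \eqref{eq:hardynorm}) must be carried out by hand, whereas the paper delegates all of it to the $H^1_{A_a}$ framework. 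These are limiting-argument technicalities rather than gaps in the idea, and under H1--H3 they go through.
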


The proof of Theorem \ref{theo:quadrature} is an application of Theorem 3.2.1 in \cite{stenger2012numerical},
whose relevant parts we include for ease of reading. Using the notation in \cite{stenger2012numerical}, $H^1_{\strip{a}}$ is the family of functions $w$ that are
 analytic in $\strip{a}$ and such that 
\begin{equation}\label{eq:hardynorm}
\nnorm{w}{H^1_{A_a}}:= \lim_{\varepsilon\to 0} \int_{\partial \strip{a}(\varepsilon)} \absval{w(z)}d\absval{z} < \infty
\end{equation}
 where 
 $$\strip{a}(\varepsilon) = \left\{z\in \C\colon \absval{\real{z}}<\frac1{\varepsilon},\ \absval{\im{z}} < a \parent{1-\varepsilon}\right\}$$
\begin{lemma}[Theorem 3.2.1 in \cite{stenger2012numerical}]\label{lem:stenger}
Let $w \in H^1_{A_a}$,
then define:
\begin{align}
I \parent{w} &= \int_\R w \parent{x} \d x \\
J \parent{w,h} &= h \ssum{j=-N}{N} w \parent{jh} \\
\zeta \parent{w,h} &= I \parent{w}- J \parent{w,h}
\end{align}
then
\begin{align}
\absval{\zeta \parent{w,h}} \leq \frac{e^{-\frac{\pi a}{h}} \nnorm{w}{H^1_{A_a}}}{2 \mathrm{sinh} \parent{\frac{\pi a}{h}}}
\end{align}
\end{lemma}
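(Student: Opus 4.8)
\emph{Approach.} The plan is to prove the lemma by Stenger's classical contour-integration argument, using the cotangent kernel to interpolate the quadrature nodes. The function $\frac{1}{2i}\cot\parent{\pi z/h}$ is meromorphic with simple poles exactly at the nodes $z=jh$, $j\in\Z$, and residue $h/\parent{2\pi i}$ at each of them. Hence, integrating $\frac{1}{2i}w(z)\cot\parent{\pi z/h}$ counterclockwise around the rectangle with vertices $\pm X\pm i a'$, $0<a'<a$, the residue theorem yields $h\ssum{\absval{jh}<X}{}w\parent{jh}$, which tends to $J\parent{w,h}$ as $X\to\infty$. (Here I read the sum as the bi-infinite one, $N\to\infty$, which is the regime in which the stated pure-quadrature bound holds; summability and the vanishing of the vertical edges at $\real{z}=\pm X$ follow from $w\in H^1_{\strip{a}}$, since $\cot\parent{\pi z/h}$ stays bounded away from the nodes while $w\to 0$ at horizontal infinity in the strip.) Thus only the two horizontal edges $\im{z}=\pm a'$ survive.

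Next I would split the kernel on each horizontal edge into its asymptotic value plus an exponentially small remainder. With $q=\expfs{2\pi i z/h}$ one has $\cot\parent{\pi z/h}=i\parent{q+1}/\parent{q-1}$, so on the upper edge $\cot\parent{\pi z/h}=-i+r_+$ with $r_+=2iq/\parent{q-1}$, and on the lower edge $\cot\parent{\pi z/h}=i+r_-$ with $r_-=2i/\parent{q-1}$. After accounting for orientation and the prefactor $1/\parent{2i}$, the constant parts $\mp i$ contribute $\tfrac12\int_\R w\parent{x+i a'}\dd x+\tfrac12\int_\R w\parent{x-i a'}\dd x$; by Cauchy's theorem each horizontal integral equals $\int_\R w\parent{x}\dd x=I\parent{w}$, so the constant parts sum exactly to $I\parent{w}$. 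Combining with the first step gives $J\parent{w,h}=I\parent{w}+\bigl[\,\text{remainder from }r_\pm\,\bigr]$, whence $\zeta\parent{w,h}=I(w)-J(w,h)=-\bigl[\,\text{remainder}\,\bigr]$.

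Finally I would estimate the remainder. On $\im{z}=\pm a'$ we have $\absval{q}=\expfs{\mp 2\pi a'/h}$, and the elementary bound $\absval{q-1}\ge\absval{1-\absval{q}}$ gives $\absval{r_\pm}\le 2\expfs{-2\pi a'/h}/\parent{1-\expfs{-2\pi a'/h}}$ on the respective edge. Therefore
\begin{align*}
\absval{\zeta\parent{w,h}}\le\frac{\expfs{-2\pi a'/h}}{1-\expfs{-2\pi a'/h}}\parent{\int_\R\absval{w\parent{x+i a'}}\dd x+\int_\R\absval{w\parent{x-i a'}}\dd x}
\end{align*}
Letting $a'\uparrow a$, the bracketed sum of boundary integrals converges to $\nnorm{w}{H^1_{A_a}}$ by definition \eqref{eq:hardynorm}, while the prefactor becomes $\expfs{-\pi a/h}/\parent{2\sinh\parent{\pi a/h}}$ after multiplying numerator and denominator by $\expfs{\pi a/h}$. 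This is exactly the claimed bound.

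\emph{Main obstacle.} The analytic core is routine; the delicate part is the limiting procedure, namely justifying simultaneously that the vertical edges vanish as $X\to\infty$ and that the horizontal-edge integrals converge to the Hardy-norm boundary integrals as $a'\uparrow a$. Both rest on the hypothesis $w\in H^1_{\strip{a}}$, which supplies the decay of $w$ at horizontal infinity and the finiteness of the boundary integrals in \eqref{eq:hardynorm}; I would isolate these as the two technical claims carrying the estimate. An equivalent alternative proceeds via the Poisson summation formula, writing $J\parent{w,h}-I\parent{w}=\ssum{k\ne 0}{}\ft{w}\parent{2\pi k/h}$ and shifting the Fourier integral to the edges of the strip to extract the factor $\expfs{-2\pi\absval{k}a/h}$; the resulting geometric sum reproduces the same constant, with the same analyticity and decay hypotheses doing the work.
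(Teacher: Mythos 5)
Your proposal is correct, and it coincides with the proof of the cited source: the paper itself does not prove this lemma but quotes it verbatim from Theorem 3.2.1 of Stenger, whose argument is precisely your cotangent-kernel residue computation --- the splitting $\cot\parent{\pi z/h}=\mp i+r_{\pm}$ on the two horizontal edges, the shift of the constant parts back to the real axis by Cauchy's theorem, and the limit $a'\uparrow a$ producing the Hardy norm together with the prefactor $e^{-2\pi a/h}/\parent{1-e^{-2\pi a/h}}=e^{-\pi a/h}/\parent{2\sinh\parent{\pi a/h}}$ --- and your Poisson-summation alternative is the standard equivalent route. You were also right to read the quadrature sum as bi-infinite: the finite range $j=-N,\dots,N$ in the paper's transcription is a typo, since the stated bound holds only for the full trapezoidal sum, the truncation to finitely many terms being handled separately in the paper through the frequency-truncation error $\EF$.
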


\begin{proof}[Proof of Theorem \ref{theo:quadrature}]
First observe that H1 and H2 imply that the function $w(z)=\expfs{-i x z + \step/2}\ft{\fa}(\tau,z+ \step/2)$ is analytic in $\strip{a}$. 
H3 allows us to use dominated convergence theorem to prove that $\nnorm{w}{H^1_{A_a}}$ is finite and coincides with $M_{\damp,a}(\tau,x)$. Applying Lemma \ref{lem:stenger} 
the proof is completed.
\end{proof}

Regarding the hypotheses of Theorem \ref{theo:quadrature}, the next propositions provide simpler conditions that imply H1 and H2 respectively.

\begin{proposition} \label{prop:analy1}
If $\alpha$, $a$ and $\nu$ are such that
\begin{equation}
\label{eq:alterH1}
\int_{y>1} \expfs{\parent{\damp+a}y}\nu(\d y) < \infty \quad \text{and}\quad \int_{y<-1} \expfs{\parent{\damp-a}y}\nu(\d y) < \infty
\end{equation}
then H1 in Theorem \ref{theo:quadrature} is fulfilled.

\begin{proof}
Denoting by $\charFun{1}{\cdot}$ the characteristic function of $X_1$, we want to prove that $z\mapsto \charFun{1}{z+\damp i}$ is analytic in $\strip{a}$. Considering that $\charFun{1}{z+\damp i}=\expfs{\charExp{iz-\damp}}$, the only non-trivial part of the proof is to verify that 
\begin{equation}
z\mapsto \int_{} p(z,y)\nu(\d y)
\end{equation}
is analytic in $\strip{a}$,
where $p\colon \strip{a} \times \R \to \C$ is given by
$$p(z,y)=\expfs{y \parent{i z-\damp}}-1-(\expfs{y}-1)\parent{i z-\damp}$$
To prove this fact, we show that we can apply the main result and the only theorem in \cite{mattner2001complex}, 
which, given a measure space $(\Omega,\mathcal{A},\mu)$ and an open subset $G\subseteq \C$, 
ensures the analyticity of $\int f(\cdot,\omega) d\mu(\omega)$, provided that $f\colon G\times \Omega \to \C$ satisfies: $f(z,\cdot)$ is $\mathcal{A}$-measurable for all $z\in G$; $f(\cdot,\omega)$ is holomorfic for all $\omega \in \Omega$; and $\int \absval{f(\cdot,\omega)} d\mu(\omega)$ is locally bounded.
In our case we consider the measure space to be $\R$ with the Borel $\sigma$-algebra and the Lebesgue measure, $G=\strip{a}$ and $f=p$. It is clear that 
$p(x,\cdot)$ is Borel measurable and $p(\cdot,y)$ is holomorphic. It remains to verify that  
\begin{equation*}
z \mapsto \int_{\R^*} \absval{p(z,y)}\nu(\d y)
\end{equation*} 
is locally bounded. 
To this end, we assume that $\real{z}<b$ (and, since $z\in \strip{a}$, $\im{z}<a$) and split the integration domain in $\absval{y}>1$ and ${0<\absval{y}\leq 1}$ to prove that both integrals are uniformly bounded.

Regarding the integral in ${\absval{y}> 1}$, we observe that
\begin{equation}
\absval{p(z,y)}\leq \expfs{y(\damp+\im{z})}+1+(\expfs{y}+1)(\damp+a+b)
\end{equation}
for $y<-1$ we have $\expfs{y(\damp+\im{z})}<\expfs{y(\damp-a)}$ while for $y>1$ we have $\expfs{y(\damp+\im{z})}<\expfs{y(\damp+a)}$. 
Using the previous bounds and the hypotheses together with \eqref{eq:nuhyp} and \eqref{eq:riskneutralexpint}, we obtain the needed bound.

For the integral in ${0<\absval{y}\leq 1}$, observe that, denoting $f(z,y)=\absval{p(z,y)}$, we have, $f(z,0)=0$ for every $z$, $\partial_y f(z,0)=0$ for every $z$, and $\absval{\partial_{yy}f(z,y)}< c$ for $z\in \strip{a}, \real{z}<b, \absval{y}<1$. From these observations we get that the McLaurin polynomial of degree one of $y\mapsto f(z,y)$ is null for every $z$, and we can bound $f(z,y)$ by the remainder term, which, in our region of interest, is bounded by $\frac{c}{2} y^2$, obtaining
\begin{equation}
\int_{0<\absval{y}\leq 1} \absval{p(z,y)}\nu(\d y)\leq \frac{c}2 \int_{0<\absval{y}\leq 1}y^2\nu(\d y)
\end{equation}
which is finite by hypothesis on $\nu$, which finishes the proof.
\end{proof}
\end{proposition}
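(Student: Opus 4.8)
The plan is to reduce the claim to the analyticity of the jump part of the characteristic exponent and then to invoke a standard criterion guaranteeing that an integral depending holomorphically on a parameter is itself holomorphic. Since $\charFun{1}{z}=\expfs{\charExp{iz}}$ and $\expfs{\cdot}$ is entire, it suffices to show that $z\mapsto\charExp{iz}$ is analytic on $\strip{a}-\damp i$. By the explicit expression \eqref{characteristicDefinition}, the drift and diffusion contributions $\parent{r-\sigma^2/2}\parent{iz}+\tfrac{\sigma^2}{2}\parent{iz}^2$ are polynomials in $z$, hence entire, so the whole matter reduces to the analyticity of
\[
z\mapsto\int_{\rz}q(z,y)\,\nu(\d y),\qquad q(z,y)=\expfs{izy}-1-\parent{\expfs{y}-1}iz,
\]
on the strip $G\equiv\strip{a}-\damp i$.

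To establish this I would apply the criterion of \cite{mattner2001complex}: the parameter integral is holomorphic on $G$ provided $q(z,\cdot)$ is measurable for each $z\in G$, $q(\cdot,y)$ is holomorphic on $G$ for each $y$, and $z\mapsto\int_{\rz}\absval{q(z,y)}\,\nu(\d y)$ is locally bounded on $G$. Measurability in $y$ and holomorphy in $z$ are immediate, since $q$ is assembled from exponentials and polynomials, so the entire content of the argument lies in the local-boundedness estimate. To obtain it I would fix an arbitrary bounded subset $\sset{z\in G\colon\absval{\real{z}}<b}$ and split the integral at $\absval{y}=1$.

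For the large jumps $\absval{y}>1$ I would use $\absval{iz}\leq\absval{\real{z}}+\absval{\im{z}}<b+\damp+a$ to bound $\absval{q(z,y)}\leq\expfs{-\im{z}\,y}+1+\parent{\expfs{y}+1}\parent{b+\damp+a}$. The crucial point is that on $G$ one has $-\im{z}\in\parent{\damp-a,\damp+a}$, so that $\expfs{-\im{z}\,y}\leq\expfs{\parent{\damp+a}y}$ for $y>1$ and $\expfs{-\im{z}\,y}\leq\expfs{\parent{\damp-a}y}$ for $y<-1$; the two hypotheses \eqref{eq:alterH1} bound exactly these pieces uniformly in $z$, while the remaining terms are controlled by \eqref{eq:riskneutralexpint} together with \eqref{eq:nuhyp}. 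For the small jumps $0<\absval{y}\leq 1$ the key observation is that $q(z,\cdot)$ vanishes to second order at the origin: one checks directly that $q(z,0)=0$ and $\partial_y q(z,0)=0$ for every $z$, thanks to the cancellation provided by the compensator $-\parent{\expfs{y}-1}iz$, while $\absval{\partial_{yy}q(z,y)}$ stays bounded by a constant $c$ uniform over the bounded $z$-region and over $\absval{y}\leq 1$. A second-order Taylor estimate then yields $\absval{q(z,y)}\leq\tfrac{c}{2}y^2$, whence $\int_{0<\absval{y}\leq 1}\absval{q(z,y)}\,\nu(\d y)\leq\tfrac{c}{2}\int_{0<\absval{y}\leq 1}y^2\,\nu(\d y)$, which is finite by \eqref{eq:nuhyp}.

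I expect the small-jump estimate to be the main obstacle: one must exploit the precise cancellation built into the integrand to secure the two vanishing orders at $y=0$, and then produce a bound on $\partial_{yy}q$ that is genuinely uniform in $z$ over compacta, since only such uniformity yields the local boundedness on which the cited analyticity criterion rests. The large-jump estimate, by contrast, is routine once the exponential growth rate $\expfs{-\im{z}\,y}$ has been matched to the correct hypothesis according to the sign of $y$.
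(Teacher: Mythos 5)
Your proposal is correct and follows essentially the same route as the paper's own proof: reduce to the jump integral, invoke the Mattner criterion for holomorphy of parameter integrals, and verify local boundedness by splitting at $\absval{y}=1$, matching the exponential growth of the integrand to the two hypotheses in \eqref{eq:alterH1} according to the sign of $y$ for large jumps and using the second-order vanishing at $y=0$ together with a uniform bound on $\partial_{yy}$ for small jumps. The only (cosmetic) differences are that you work with the unshifted variable on the strip $\strip{a}-\damp i$ rather than shifting to $\strip{a}$, and that you Taylor-expand the holomorphic integrand itself before taking absolute values, which is if anything slightly cleaner than the paper's expansion of $\absval{p(z,y)}$.
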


\begin{proposition}
If for all $b<a$, the function $x\mapsto \expfs{b\absval{x}}{\ga(x)}$ is in $L^2(\R)$ then H2 in Theorem \ref{theo:quadrature} is fulfilled.
\end{proposition}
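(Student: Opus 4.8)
The plan is to verify H2 directly from the definition of the Fourier transform, by showing that the integral defining $\ft{\ga}$ not only converges throughout the strip $\strip{a}$ but defines an analytic function there. For $z=\omega+i\beta\in\strip{a}$ I would consider the candidate extension
\begin{equation*}
\ft{\ga}(z)=\int_\R \expfs{izx}\ga(x)\,\d x,
\end{equation*}
and note that $\absval{\expfs{izx}}=\expfs{-\beta x}$, so that pointwise $\absval{\expfs{izx}\ga(x)}=\expfs{-\beta x}\absval{\ga(x)}\le \expfs{\absval{\beta}\absval{x}}\absval{\ga(x)}$, since $-\beta x\le \absval{\beta}\absval{x}$.

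First I would establish integrability, i.e. that $x\mapsto \expfs{-\beta x}\ga(x)$ lies in $L^1(\R)$ whenever $\absval{\beta}<a$. This is the one place where the hypothesis enters essentially: it is stated in $L^2$, whereas convergence of the Fourier integral requires $L^1$. To bridge the gap I would pick $b$ with $\absval{\beta}<b<a$ and factor
\begin{equation*}
\expfs{\absval{\beta}\absval{x}}\absval{\ga(x)}=\expfs{-(b-\absval{\beta})\absval{x}}\cdot\parent{\expfs{b\absval{x}}\absval{\ga(x)}}.
\end{equation*}
The second factor is in $L^2(\R)$ by hypothesis (as $b<a$), while the first is a genuinely decaying exponential (as $b>\absval{\beta}$) and hence also in $L^2(\R)$; by Cauchy--Schwarz the product is in $L^1(\R)$, with a bound depending on $\beta$ only through $b$.

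With integrability in hand I would conclude analyticity by the same device used in Proposition \ref{prop:analy1}, namely the theorem of \cite{mattner2001complex}. Taking the measure space to be $\R$ with Lebesgue measure, $G=\strip{a}$, and $f(z,x)=\expfs{izx}\ga(x)$, the required measurability of $f(z,\cdot)$ and holomorphy (indeed entireness) of $f(\cdot,x)$ are immediate. The only substantive point is local boundedness of $z\mapsto\int_\R\absval{f(z,x)}\,\d x$: on any compact $K\subset\strip{a}$ the imaginary parts stay in some $[-\beta_0,\beta_0]$ with $\beta_0<a$, and fixing $b$ with $\beta_0<b<a$ the estimate of the previous step gives $\int_\R\absval{f(z,x)}\,\d x\le\int_\R\expfs{\beta_0\absval{x}}\absval{\ga(x)}\,\d x<\infty$ uniformly for $z\in K$. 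Mattner's theorem then yields analyticity of $z\mapsto\ft{\ga}(z)$ on $\strip{a}$, which is exactly H2.

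The main obstacle is precisely this passage from the $L^2$-type hypothesis to the $L^1$ integrability that both the Fourier integral and the analyticity criterion demand. The fact that the hypothesis holds for \emph{every} $b<a$ is what creates room to split off an extra decaying exponential and invoke Cauchy--Schwarz, so the mild-looking gap between $L^2$ and $L^1$ closes at no cost; everything else in the argument is routine verification of the hypotheses of the differentiation-under-the-integral theorem.
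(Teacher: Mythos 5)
Your proof is correct, but it takes a genuinely different route from the paper's, which disposes of the proposition in a single line by citing Theorem IX.13 of \cite{reedsimon} --- a Paley--Wiener--type theorem stating precisely that $e^{b\absval{x}}f\in L^2(\R)$ for all $b<a$ is equivalent to $\hat f$ extending analytically to the strip $\strip{a}$ (with uniformly bounded $L^2$ norms on horizontal lines). Your argument is instead self-contained: you first upgrade the $L^2$ hypothesis to $L^1$ integrability of $x\mapsto e^{-\beta x}\ga(x)$ for every $\absval{\beta}<a$ by peeling off a decaying exponential and applying Cauchy--Schwarz --- correctly exploiting that the hypothesis holds for \emph{every} $b<a$ --- and then invoke Mattner's differentiation-under-the-integral theorem \cite{mattner2001complex}, the same tool the paper already uses in the proof of Proposition \ref{prop:analy1}, to obtain analyticity. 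What the citation buys the paper is brevity plus the converse direction and the $L^2$ bounds along horizontal lines; what your argument buys is methodological uniformity with the rest of the section and the explicit extra fact that the analytic continuation is still given by the absolutely convergent Fourier integral on each line $\im{z}=\beta$, which is in any case what the paper implicitly relies on when it manipulates $\ft{\fa}(\tau,\omega+i\beta)$ as an integral in Theorem \ref{teo:boundM}. Both proofs are valid.
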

\begin{proof}
The proof is a direct application of Theorem IX.13 in \cite{reedsimon}
\end{proof}
We now turn our attention to a more restricted class of Levy processes. 
Namely, processes such that either $\sigma^2>0$ or there exists $\lambda \in (0,2)$ such that $\cl$ defined in \eqref{eq:PureJumpC} is strictly positive. 
For this class of processes, we can state our main result explicitly in terms of the characteristic triplet. 

Given $\lambda \in (0,2)$, define $\cl$ as
\begin{equation}
\label{eq:PureJumpC}
\cl  = 
\textrm{inf}_{\kappa > 1} \left\{\kappa^{\lambda} 
 \int_{0<|y| < \frac{1}{\kappa}} y^2 \nu(\d y)\right\}
\end{equation}
Observe that $\cl\geq 0$ and, by our assumptions on the jump measure $\nu$, $\cl$ is finite. Furthermore, if $\lambda \in (0,2)$ is such
that
\begin{equation}
\label{eq:pureJumpAssumption}
\liminf_{\epsilon \downarrow 0} \frac{1}{\epsilon^{\lambda}}
 \int_{0<|y| < \epsilon} y^2 \nu(\d y) > 0
\end{equation}
then $\cl>0.$ To see this, observe that \eqref{eq:pureJumpAssumption} implies the existence of $\epsilon_0$ such that 
\begin{equation*}
\inf_{\epsilon \leq \epsilon_0}\left\{ \frac{1}{\epsilon^{\lambda}}
 \int_{0<|y| < \epsilon} y^2 \nu(\d y)\right\} > 0
\end{equation*}
If $\epsilon_0<1$ observe that
\begin{equation*}
\inf_{\epsilon_0 \leq \epsilon \leq 1}\left\{ \frac{1}{\epsilon^{\lambda}}
 \int_{0<|y| < \epsilon} y^2 \nu(\d y)\right\} \geq  \int_{0<|y| < \epsilon_0} y^2 \nu(\d y) > 0
\end{equation*}
where for the first inequality it was taken into account that $\frac{1}{\epsilon^{\lambda}}\geq 1$ and that the integral is increasing with $\epsilon$.
Combining the two previous infima and considering $\absval{\kappa}=\frac{1}{\epsilon}$ we get that $\cl>0.$

Furthermore, we note that for a \levy model with finite jump
intensity, such as the Black-Scholes and Merton models that satisfy the first of our
assumption, $\cl=0$ for all $\lambda \in (0,2)$.

\begin{theorem} \label{teo:boundM}
Assume that: $\damp$ and $a$ are such that \eqref{eq:alterH1} holds; $\ft{\ga} \in \linf{\strip{a}}$; and either $\sigma^2>0$ or $\cl>0$ for some $\lambda\in (0,2)$.
Then the quadrature error is bounded by
\begin{equation*}
\ErrorQuadrature \leq \expfs{\damp x} \frac{\tilde{M}_{\damp,a}(\tau,x) }{2\pi\parent{\expfs{2\pi a/\step}-1}}
\end{equation*}
where 
\begin{equation}\label{eq:Mbound}
\tilde{M}_{\damp,a} \parent{\tau,x} 
=
\ssum{c\in \sset{-1,1}}{} 
\expfs{cax}
\expfs{\tau \Psi \parent{ca}} \absval{\ft{\ga} \parent{ca}}
\int_{\R}  \expfs{-\tau \left( \frac{\sigma^2}{2}\omega^2 + \frac{|\omega|^{2 - \lambda}}{4} \cl\indicator_{\absval{\omega}>1}\right) } \d \omega
\end{equation}
Furthermore, if $\sigma^2>0$ we have
\begin{equation}
\label{eq:Mboundsigpositive}
\tilde{M}_{\damp,a} \parent{\tau,x}  \leq \frac{\sqrt{2\pi}}{\sigma\sqrt{\tau}}\ssum{c\in \sset{-1,1}}{} 
\expfs{cax}
\expfs{\tau \Psi \parent{ca}} \absval{\ft{\ga} \parent{ca}}
\end{equation}

\begin{proof}
Considering $h_{\damp,a}(\tau,x,\omega)$ defined by
\begin{equation}
h_{\damp,a} \parent{\tau,x,\omega} = \ssum{c \in \sset{-1,1}}{} \absval{\expfs{-i\parent{\omega + ica} x}\ft{\fa}(\tau,\omega+i ca)} 
\end{equation}
we have that
$$M_{\damp,a}(\tau,x) = \int_{\R}h_{\damp,a} \parent{\tau,x,\omega} d\omega$$
On the other hand, for  $\beta\in (-a,a)$:
\begin{equation}
\absval{\expfs{-i\parent{\omega + i\beta} x}\ft{\fa}(\tau,\omega+i \beta)} = \expfs{\beta x} \absval{\ft{\fa}(\tau,\omega+i \beta)}=\expfs{\beta x}\absval{\expfs{\tau \charExp{\damp+\beta-i\omega}}}\absval{ \ft{\ga}\parent{\omega+i\beta}}
\end{equation}
For the factor involving the characteristic exponent we have
\begin{equation}
\absval{\expfs{\tau \charExp{\damp+\beta-i\omega}}}=\expfs{\tau \real{\charExp{\damp+\beta-i\omega}}}
\end{equation}

Now, observe that
\begin{align}
\notag \real{\charExp{ \damp +\beta - i \omega}}&= \parent{\damp + \beta} \parent{r-\frac{\sigma^2}{2}}+\frac{\sigma^2}{2}\parent{\parent{\damp +\beta}^2 - \omega^2} \\
\label{eq:boundingPsiPureJump} &\quad +\int_{\rz}\parent{ \expfs{\parent{\damp + \beta} y} \cos(-y\omega)-1-\parent{\damp + \beta} \parent{\expfs{y}-1}} \nu(\d y)
\end{align}

If $|\omega| \leq 1$ we bound $\cos(-y \omega)$ by 1, getting
\begin{align}
\notag \real{\charExp{ \damp +\beta - i \omega}} &\leq \parent{\damp + \beta} \parent{r-\frac{\sigma^2}{2}}+\frac{\sigma^2}{2}\parent{\parent{\damp +\beta}^2 - \omega^2} \\
\notag &\quad +\int_{\rz} \parent{\expfs{\parent{\damp + \beta} y} -1-\parent{\damp + \beta} \parent{\expfs{y}-1}} \nu(\d y)  \\
& = \charExp{\damp+\beta} -\frac{\sigma^2}{2}\omega^2 \label{eq:Psibound}
\end{align}

Assume $| \omega| > 1$.
Using that for $|x| < 1$ it holds that $\textrm{cos}(x) < 1 - x^2 / 4$, we can bound the first term of the integral in the 
following manner:
\begin{align}
\notag \int_{\rz} \expfs{\parent{\damp + \beta} y} \cos(y\omega) \nu(\d y) & \leq
\int_{0<|y| < 1/|\omega|} \expfs{\parent{\damp + \beta} y} \left( 1 - \omega^2 y^2 / 4 \right)  \nu(\d y) \\
\notag & \quad + 
\int_{|y| \geq  1/|\omega|} \expfs{\parent{\damp + \beta} y}  \nu(\d y) \\
\notag & 
\leq \int_{\rz} \expfs{\parent{\damp + \beta} y}  \nu(\d y) 
- \frac{|\omega|^{2 - \lambda}}{4} | \omega |^{\lambda} \int_{0<|y| < 1 / |\omega|} y^2 \nu(\d y)  \\
\label{eq:extraTermPureJump} & 
\quad 
\leq \int_{\rz} \expfs{\parent{\damp + \beta} y}  \nu(\d y) 
- \frac{|\omega|^{2 - \lambda}}{4} \cl
\end{align}
Inserting \eqref{eq:extraTermPureJump} back into \eqref{eq:boundingPsiPureJump} we get
\begin{align*}
\notag 
\real{\charExp{ \damp +\beta - i \omega}} & \leq \parent{\damp + \beta} \parent{r-\frac{\sigma^2}{2}}+\frac{\sigma^2}{2}\parent{\parent{\damp +\beta}^2 - \omega^2} \\
\notag &\quad + \int_{\rz} \parent{\expfs{\parent{\damp + \beta} y} -1-\parent{\damp + \beta} \parent{\expfs{y}-1}} \nu(\d y) \\
\notag & \quad \quad - \frac{|\omega|^{2 - \lambda}}{4} \cl    \\
& = \charExp{\damp+\beta} -\frac{\sigma^2}{2}\omega^2 - \frac{|\omega|^{2 - \lambda}}{4} \cl
\end{align*}

%We conclude that $\real{\Psi \parent{ \damp - i z}}\to -\infty$ uniformly in $A$ when $\absval{z}\to \infty$, and in consequence $h(z)\to 0$.
Taking the previous considerations and integrating in $\R$ with respect to $\omega$, we obtain \eqref{eq:Mbound}. 

Finally, observing that  $\cl\geq 0$ and bounding it by 0, the bound \eqref{eq:Mboundsigpositive} is obtained by evaluating the integral.
\end{proof}
\end{theorem}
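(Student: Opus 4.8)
The plan is to obtain the bound as an application of Theorem~\ref{theo:quadrature}: the present hypotheses are tailored so that H1--H3 hold, and the whole task then reduces to replacing the Hardy norm $M_{\damp,a}(\tau,x)$ by the explicit quantity $\tilde M_{\damp,a}(\tau,x)$. Hypothesis H1 follows at once from Proposition~\ref{prop:analy1}, since \eqref{eq:alterH1} is assumed. Hypothesis H2 is already contained in the standing assumption $\ft\ga\in\linf{\strip a}$, which presupposes that $\ft\ga$ extends analytically to $\strip a$. Hypothesis H3 I would not verify separately: the pointwise estimate derived below furnishes, for every $\beta\in[-a,a]$, a single continuous $L^1$ dominating function, and its integrability is exactly what the assumption that either $\sigma^2>0$ or $\cl>0$ guarantees.

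The heart of the proof is a pointwise upper bound on the integrand of $M_{\damp,a}$. Using \eqref{eq:theODE}, for $\beta\in(-a,a)$ one factorises
\begin{equation*}
\absval{\expfs{-i\parent{\omega+i\beta}x}\ft\fa(\tau,\omega+i\beta)}
=\expfs{\beta x}\,\expfs{\tau\real{\charExp{\damp+\beta-i\omega}}}\,\absval{\ft\ga\parent{\omega+i\beta}},
\end{equation*}
so the problem reduces to bounding $\real{\charExp{\damp+\beta-i\omega}}$ from above. Writing out \eqref{characteristicDefinition} and taking the real part isolates the oscillatory term $\int_{\rz}\expfs{\parent{\damp+\beta}y}\cos(y\omega)\,\nu(\d y)$. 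I would split the analysis according to $\absval{\omega}\le1$ and $\absval{\omega}>1$. On the first region, bounding $\cos$ by $1$ gives directly $\real{\charExp{\damp+\beta-i\omega}}\le\charExp{\damp+\beta}-\tfrac{\sigma^2}2\omega^2$. On the second region, the sharper inequality $\cos(t)<1-t^2/4$ for $\absval{t}<1$, applied on the small-jump set $0<\absval{y}<1/\absval{\omega}$, produces an extra negative contribution equal to $-\tfrac{\absval{\omega}^{2-\lambda}}4\,\absval{\omega}^{\lambda}\!\int_{0<\absval{y}<1/\absval{\omega}}y^2\nu(\d y)$, which is bounded above by $-\tfrac{\absval{\omega}^{2-\lambda}}4\cl$ by the very definition \eqref{eq:PureJumpC} of $\cl$ (take $\kappa=\absval{\omega}>1$). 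Together these give the uniform estimate
\begin{equation*}
\real{\charExp{\damp+\beta-i\omega}}\le\charExp{\damp+\beta}-\frac{\sigma^2}2\omega^2-\frac{\absval{\omega}^{2-\lambda}}4\,\cl\,\indicator_{\absval{\omega}>1}.
\end{equation*}

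From here the computation is assembly. Substituting the estimate back, the $\omega$-independent factors $\expfs{\beta x}\expfs{\tau\charExp{\damp+\beta}}$ and the $\linf{\strip a}$-bound on $\absval{\ft\ga}$ are taken out of the integral; summing over $\beta=ca$, $c\in\sset{-1,1}$, and integrating the surviving factor $\expfs{-\tau(\frac{\sigma^2}2\omega^2+\frac{\absval{\omega}^{2-\lambda}}4\cl\indicator_{\absval{\omega}>1})}$ over $\R$ yields \eqref{eq:Mbound}. For the refinement when $\sigma^2>0$, I would discard the nonnegative jump term (using $\cl\ge0$) and evaluate the resulting Gaussian integral, $\int_\R\expfs{-\tau\sigma^2\omega^2/2}\d\omega=\sqrt{2\pi}/(\sigma\sqrt\tau)$, which gives \eqref{eq:Mboundsigpositive}.

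The step I expect to be the main obstacle is the large-$\absval{\omega}$ control of the jump integral: one must turn the second-order Taylor behaviour of the cosine into a decay estimate uniform in $\omega$, and this succeeds only because the scaling $\kappa^{\lambda}\int_{0<\absval{y}<1/\kappa}y^2\nu(\d y)$ admits the positive lower bound $\cl$. This is what forces the integrable decay rate $\expfs{-c\absval{\omega}^{2-\lambda}}$ (respectively $\expfs{-c\omega^2}$ when $\sigma^2>0$) and hence simultaneously secures hypothesis H3. The remaining pieces --- splitting off the $\absval{y}>1$ part of the jump integral and invoking \eqref{eq:nuhyp} and \eqref{eq:riskneutralexpint} for its finiteness --- are routine.
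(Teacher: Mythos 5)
Your proposal is correct and follows essentially the same route as the paper's proof: the same factorisation of the integrand of $M_{\damp,a}$, the same split of the bound on $\real{\charExp{\damp+\beta-i\omega}}$ at $\absval{\omega}=1$ using $\cos(t)<1-t^2/4$ on the small-jump set together with the definition of $\cl$, and the same final assembly and Gaussian evaluation for \eqref{eq:Mboundsigpositive}. The only cosmetic difference is that you make explicit the verification of H1--H3 of Theorem~\ref{theo:quadrature} (via Proposition~\ref{prop:analy1} and the dominating function), which the paper leaves implicit.
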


\begin{remark} \label{remark:callPut}
In the case of call options, hypothesis H2
implies a dependence between the strip-width parameter
$a$ and damping parameter $\damp$. We have that the
damped payoff of the call option is in $L^1 \parent{\R}$
if and only if $\damp > 1$ and hence the appropriate choice
of strip-width parameter is given by $0<a< \damp-1$.
A similar argument holds for the case of put options, for
which the Fourier-transformed damped payoff is identical
to the calls with the distinction that $\damp <0$.
In such case, we require $a < -\damp$.

The case of binary options whose payoff has finite support
($G \parent{x} = \mathbf{1}_{[x_-,x_+]} \parent{x}$) 
we can set any $a \in \R$ (ie, no damping is needed at all and
even if such damping is chosen, it has no effect on the appropriate
choice of $a$). 
\end{remark}

\begin{remark}
The bound we provide for the quadrature error is naturally positive and increasing in $\step$.
It decays to zero at a spectral rate as $\step$ decreases to 0.
\end{remark}

\subsection{Frequency truncation error}
The frequency truncation error is given by
\begin{align*}
\ErrorFrequency &= \frac{\expfs{\damp x}\step}{\pi} \absval{\sum_{k=n}^{\infty} \real{\expfs{-i{\parent{k+\frac12}\step} x} \ft{\fa}\parent{\tau,{\parent{k+\frac12}\step}}}}
\end{align*}
If a function $c\colon (\omega_0,\infty) \to (0,\infty)$ satisfies
\begin{equation} \label{eq:condFunc}
\absval{\real{\expfs{-i{\parent{k+\frac12}\step} x} \ft{\fa}\parent{\tau,{\parent{k+\frac12}\step}}}} \leq c\parent{\parent{k+\frac12}\step}
\end{equation}
for every natural number $k$, then we have that 
\begin{align*}
\ErrorFrequency &\leq \frac{\expfs{\damp x}\step}{\pi} \sum_{k=n}^{\infty}\absval{\real{\expfs{-i{\parent{k+\frac12}\step} x} \ft{\fa}\parent{\tau,{\parent{k+\frac12}\step}}}} \\
& \leq \frac{\expfs{\damp x}\step}{\pi} \sum_{k=n}^{\infty} c\parent{\parent{k+\frac12}\step}
\end{align*}
Furthermore, if $c$ is a non-increasing concave integrable function, we get 
\begin{equation}
\ErrorFrequency \leq \frac{\expfs{\damp x}}{\pi} \int_{n\step}^{\infty} c(\omega) \d\omega
\label{eq:truncationErrorIntegral}
\end{equation} 
When $\ft{\ga}\in \linf{[\omega_0,\infty)}$ and either $\sigma^2>0$ or $\cl>0$, then the function $c$ in \eqref{eq:condFunc} can be chosen as
\begin{equation} \label{eq:cutoffparticularSimpler}
c(\omega)=\norminf{\ft{\ga}}{[\omega_0,\infty)}\expfs{\tau \charExp{\damp}}\expfs{-\tau \parent{\frac{\sigma^2}{2} \omega^2 +\frac{|\omega|^{2 - \lambda}}{4} \cl\indicator_{\absval{\omega}>1} }}
\end{equation}

To prove that this function satisfies \eqref{eq:condFunc} we can use the same bound we found in the proof of Theorem \ref{teo:boundM}, with $\beta=0$, to obtain
\begin{equation*}
\real{\charExp{ \damp - i \omega}}\leq  \charExp{\damp} - \frac{\sigma^2}{2}\omega^2  - \frac{|\omega|^{2 - \lambda}}{4} \cl \indicator_{\absval{\omega}>1}
\end{equation*}
from where the result is straightforward.

\subsection{Bound for the full error}

In this section we summarize the bounds obtained for the error under different 
assumptions and analyse their central properties. 

In general the bound provided in this paper are of the form
\begin{equation}\label{eq:fullErrorGeneral}
\est = \frac{\expfs{\damp x}}{\pi}\parent{ \frac{\bar{M}}{\expfs{2\pi a/\step}-1} +  \int_{n\step}^{\infty}  c\parent{\omega} \d \omega}
\end{equation}
where $\bar{M}$ is an upper bound of $M_{\damp,a}\parent{\tau,x}$ defined in \eqref{eq:defM} and $c$ is non-increasing, integrable and satisfies
\eqref{eq:condFunc}. Both $\bar{M}$ and $c$ may depend on the 
parameters of the model and the artificial parameters, but they are independent of $\step$ and $n$. Typically one can remove 
the dependence of some of the parameters, simplifying the expressions but obtaining
less tight bounds.

When analysing the behaviour of the bound one can observe that the term correspondent with the quadrature error decreases to zero
spectrally when $\step$ goes to 0. 
The second term goes to zero if $n\step$ diverges, but we are unable to determine
the rate of convergence without further assumptions.

Once an expression for the error bound is obtained, the problem of 
how to choose the parameters of the numerical method to minimise the bound arises, 
assuming a constraint on the computational effort one is willing to use. 
The computational effort of the numerical method depends
only on  $n$. For this reason we aim at finding the parameters that
minimise the bound for a fixed $n$.
The following result shows that the bound obtained, as a function of $\step$, 
has a unique local minimum, which is the global minimum.
\begin{proposition}
Fix $\alpha$, $a$, $n$, and $\lambda$ and consider the bound $\est$ as a
function of $\step$. There exists an optimal $\step^*\in [\frac{\omega_0}{n},\infty)$ such that $\est$ is
decreasing in $(\frac{\omega_0}{n},\step^*)$ and increasing in
$(\step^*,\infty)$; thus, a global minimum of $\est$ is attained at $\step^*$. 

Furthermore, the optimal $\step$ is either the only point in which 
$\step\mapsto p\parent{n\step,b}-c(n\step)$, with $p$ defined in \eqref{eq:defp}, changes sign, or $\step=\frac{\omega_0}{n}$ if $p\parent{\omega_0,b}-c(\omega_0)>0$.

\begin{proof}
Let us simplify the notation by calling $y=n\step$, $b=2 \pi a n$ and $\tilde{\E}=\pi \expfs{-\damp x}\est$. We want to prove the existence of $y^*\colon y^*\geq \omega_0$ such that $\tilde{\E}(y)$ is decreasing for $\omega_0<y< y^*$ and increasing for $y> y^*$. We have 
$$\tilde{\E}(y)=\frac{\bar{M}}{\expfs{b/y}-1} +  \int_{y}^{\infty} c(\omega) \d\omega.$$
The first term is differentiable with respect to $y$ and goes to 0 if $y\to 0^+$. This allows us to express it as an integral of its derivative. We can then express $\tilde{\E}(y)$ as 
$$\tilde{\E}(y)=\tilde{\E}(\omega_0) + \int_{\omega_0}^y \parent{\frac{b \bar{M} \expfs{b/\omega}}{\parent{\expfs{b/\omega}-1}^2 \omega^2}-c(\omega)}d(\omega)$$
The first term on the right-hand side of the previous equation is constant. Now we move on to proving that the integrand is increasing with $y$ and it is positive if $y$ is large enough. Denote by
\begin{equation}\label{eq:defp}
p(y,b)=\frac{b \bar{M} \expfs{b/y}}{\parent{\expfs{b/y}-1}^2 y^2}
\end{equation}
Taking into account that $c$ is integrable, we can compute the limit of the integrand in $\infty$, obtaining
\begin{equation*}
\lim_{y \to +\infty}p(y,b)-c(y)=\frac{\bar{M}}{b}>0
\end{equation*} 
Let us prove that $p(y,b)$ is increasing with $y$ for all $b>0$, which renders $p(y,b)-c(y)$ also increasing with $y$. The derivative of $p$ with respect to $y$ is given by
$$\partial_{y} p(y,b)=\frac{b\bar{M}\expfs{b/y}\parent{(b/y)\expfs{b/y}-2\expfs{b/y}+b/y+2}}{y^3\parent{\expfs{b/y}-1}^3}$$
in which the denominator and the first factor in the numerator are clearly positive. To prove that the remainder factor is also positive, observe that 
$x\expfs{x}-2\expfs{x}+x+2>0$ if $x>0$.

\end{proof}
\end{proposition}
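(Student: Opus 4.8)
The plan is to reduce the statement to a one-dimensional monotonicity analysis via the substitution already hinted at in the claim. First I would set $y = n\step$ and $b = 2\pi a n$, and work with the rescaled quantity $\tilde{\E}(y) = \pi\expfs{-\damp x}\est$, so that
$$\tilde{\E}(y) = \frac{\bar{M}}{\expfs{b/y}-1} + \int_y^\infty c(\omega) \d\omega.$$
Since $\tilde{\E}$ is obtained from $\est$ by a strictly positive multiplicative constant together with the orientation-preserving change of variable $y = n\step$, it suffices to prove that $\tilde{\E}$, restricted to $y \geq \omega_0$ (the constraint that makes the truncation integral meaningful, as $c$ is only defined on $(\omega_0,\infty)$), is decreasing and then increasing with a single interior minimiser.

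Next I would differentiate. The first term is $C^1$ on $(0,\infty)$ and tends to $0$ as $y\to 0^+$, because $\expfs{b/y}\to\infty$; its derivative is precisely $p(y,b)$ of \eqref{eq:defp}, while the truncation term has derivative $-c(y)$. This allows me to write $\tilde{\E}(y) = \tilde{\E}(\omega_0) + \int_{\omega_0}^y \parent{p(\omega,b) - c(\omega)} \d\omega$, so that the entire qualitative behaviour of $\tilde{\E}$ is dictated by the sign of the integrand $p(y,b) - c(y)$.

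The heart of the argument is to show that $p(y,b) - c(y)$ is strictly increasing in $y$, hence crosses zero at most once. Since $c$ is non-increasing, $-c$ is non-decreasing, so it is enough to establish that $p(\cdot,b)$ is increasing. Differentiating $p$ in $y$ produces, after factoring out manifestly positive quantities, the bracket $x\expfs{x} - 2\expfs{x} + x + 2$ with $x = b/y > 0$, so the monotonicity of $p$ reduces to the scalar inequality $x\expfs{x} - 2\expfs{x} + x + 2 > 0$ for $x > 0$. I would verify this by observing that this function and its first derivative both vanish at $x = 0$, while its second derivative equals $x\expfs{x} > 0$, forcing positivity on $(0,\infty)$. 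I expect this elementary but delicate monotonicity of $p$ to be the main obstacle, since it is exactly what upgrades ``finitely many sign changes'' to a \emph{single} sign change and thereby yields a unique minimum.

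Finally I would locate the sign change. Using that $c$ is integrable (so $c(y)\to 0$) and that $p(y,b)\to \bar{M}/b$ as $y\to\infty$, the integrand tends to the strictly positive limit $\bar{M}/b$. Being continuous, strictly increasing, and positive at infinity, $p(\cdot,b) - c$ changes sign at most once, necessarily from negative to positive. Two cases then complete the proof: if $p(\omega_0,b) - c(\omega_0) > 0$, the integrand is positive throughout $[\omega_0,\infty)$, so $\tilde{\E}$ is increasing and the minimiser is the left endpoint $\step^* = \omega_0/n$; otherwise there is a unique $y^*$ with $p(y^*,b) = c(y^*)$, and $\tilde{\E}$ decreases on $(\omega_0,y^*)$ and increases thereafter, giving the unique global minimum at $\step^* = y^*/n$. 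Translating back through $y = n\step$ recovers exactly the stated dichotomy.
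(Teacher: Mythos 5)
Your proposal is correct and follows essentially the same route as the paper's proof: the substitution $y=n\step$, $b=2\pi a n$, writing $\tilde{\E}$ as an integral of $p(\omega,b)-c(\omega)$, establishing monotonicity of $p$ via the inequality $x\expfs{x}-2\expfs{x}+x+2>0$, and using the limit $\bar{M}/b>0$ at infinity to deduce a single sign change. The only (welcome) addition is that you actually verify the elementary inequality by checking that the function and its first derivative vanish at $0$ while the second derivative is $x\expfs{x}>0$, a step the paper merely asserts.
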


\subsection{Explicit error bounds}
\label{ss:bounds}
In the particular case when either $\sigma^2>0$ or 
$\cl>0$ for some $\lambda \in (0,2)$ we can give an explicit version 
of \eqref{eq:fullErrorGeneral}. Substituting $M$ by $\tilde{M}$ defined in Theorem \ref{teo:boundM} and $c$ by the function given in \eqref{eq:cutoffparticularSimpler} we obtain
\begin{align}%\tag{\ref{eq:fullErrorGeneral}*}
\label{eq:fullErrorExpression}
\est &= \estQuadrature + \estFrequency
\end{align}
where
\begin{align}
\label{eq:boundQ} \estQuadrature &= 
\ssum{c\in \sset{-1,1}}{} 
\frac{
\expfs{\damp x}
\expfs{cax}
\expfs{\tau \Psi \parent{ca}} \absval{\ft{\ga} \parent{ca}}}{\pi \parent{ \expfs{\frac{2 \pi a}{\step}} -1}}
\int_{\R}  \expfs{-\tau \left( \frac{\sigma^2}{2}\omega^2 + \frac{|\omega|^{2 - \lambda}}{4} \cl\indicator_{\absval{\omega}>1}\right) } \d \omega
\\[.5em]
\label{eq:boundF}
\estFrequency &= \frac{\expfs{\damp x}}{\pi}
\norminf{\ft{\ga}}{\R}
\expfs{\tau \charExp{\damp}} \int_{n\step}^{\infty} 
\expfs{-\tau \parent{\frac{\sigma^2}{2} \omega^2+\frac{|\omega|^{2 - \lambda}}{4} \cl \indicator_{\absval{\omega}>1}} } d\omega
\end{align}
This reproduces the essential features of Theorem 6.6
in \cite{fenglinetsky},
the bound \eqref{eq:boundF} can be further improved by substituting $\norminf{\ft{\ga}}{\R}$ by $\norminf{\ft{\ga}}{[n\step,\infty)}$.

\begin{remark}\label{rem:decuople}
Observe that the bound of both the quadrature and the cutoff error is given
by a product of one factor that depends exclusively on the payoff and 
another factor that depends on the asset dynamic. This property makes it easy to evaluate
the bound for a specific option under different dynamics of the asset price.
In Subsection \ref{subsec:call} we analyse the terms
that depend on the payoff function for the particular case of call options.
\end{remark}

\begin{remark}\label{rem:asynonasy}
From \eqref{eq:boundQ} it is evident that the speed of the exponential convergence
of the trapezoidal rule for analytic functions is dictated by the width of the strip in which
the function being transformed is analytic. Thus, in the limit of small error tolerances,
it is desirable to set $a$ as large as possible to obtain optimal rates.
However, non-asymptotic error tolerances are often practically relevant and in these
cases the tradeoff between optimal rates and the constant term $\absval{\ft{\ga}}$
becomes non-trivial. As an example, for the particular case of the Merton model,
we have that any finite value of $a$ will do. However, this improvement of the rate of
spectral convergence is more than compensated for by the divergence in the constant term.
\end{remark}

The integrals 
in \eqref{eq:boundQ} and \eqref{eq:boundF} can, in some cases, be computed analytically,
or bounded from above by a closed form expression.
Consider for instance dissipative models with finite jump intensity. 
These models are characterised by $\sigma^2>0$ and $\cl=0$. Thus the integrals can be expressed in terms of the cumulative normal distribution $\Phi$:
\begin{align}
\int_{\mathbb R} \expfs{-\tau \frac{\sigma^2 \omega^2}{2}}d\omega&=\sqrt{\frac{2 \pi}{\tau \sigma^2}},
\\
\int_{\varsigma}^\infty \expfs{-\tau \frac{\sigma^2 \omega^2}{2}} d \omega &= 
\sqrt{\frac{2 \pi}{\tau \sigma^2}}
\parent{1-\Phi \parent{\varsigma \sqrt{\tau \sigma^2}}}
\end{align}

Now we consider the case of pure-jump processes (ie, $\sigma^2 =0$) that satisfy the condition $\cl > 0$ for some $\lambda \in (0,2)$. 
In this case the integrals are expressible
in terms of the incomplete gamma function $\gamma$.
First, let us define the auxiliary integral:
\begin{align}
%%% Including the full derivation, remove intermediate stages if too dull.
I \parent{a,b} 
&\equiv
\expfs{-a} + a^{-\frac{1}{b}} \gamma \parent{\frac{1}{b},a}
\nonumber
\end{align}
for $a,b>0$. Using this, the 
%constant coefficients
integrals become:
\begin{equation}
\int_{\R} \expfs{- \tau \frac{\absval{\omega}^{2-\lambda}}{4} \cl \mathbf{1}_{\absval{\omega}>1}}
= 2 \parent{1+ I \parent{\frac{\tau \cl}{4},2-\lambda}}
\end{equation}

\begin{equation}
\int_{\varsigma}^\infty \expfs{- \tau \frac{\absval{\omega}^{2-\lambda}}{4} \cl \mathbf{1}_{\absval{\omega}>1}}
= \begin{cases}
I \parent{\frac{\tau \cl}{4},2-\lambda} + 1-\varsigma & \quad \varsigma <1\\
\varsigma I \parent{\frac{\tau \varsigma^{2-\lambda}\cl}{4},2-\lambda} &\quad \varsigma \geq 1
\end{cases}
\end{equation}
An example of a process for which the previous analysis works
is the CGMY model presented in \cite{cgmy_fine,carr2003stochastic}, for the regime $\cgmyY > 0$.

Lastly, when both $\cl$ and $\sigma^2$ are positive, 
the integrals in \eqref{eq:boundQ} and \eqref{eq:boundF} can be bounded by a simpler expression. 
Consider the two following auxiliary bounds for the same integral, in which $\varsigma\geq 1$:
\begin{align}
\int_{\varsigma}^\infty \expfs{-\tau \parent{ \frac{\sigma^2}{2}\omega^2 + \frac{|\omega|^{2 - \lambda}}{4} \cl} } d\omega
& \leq \expfs{-\tau \frac{\sigma^2}{2}\varsigma^2} \int_{\varsigma}^\infty \expfs{-\tau \frac{|\omega|^{2 - \lambda}}{4} \cl} d\omega \\
& = \varsigma \expfs{-\tau \frac{\sigma^2}{2}\varsigma^2}  I \parent{\frac{\tau \varsigma^{2-\lambda}\cl}{4},2-\lambda} \notag
\end{align}

\begin{align}
\int_{\varsigma}^\infty \expfs{-\tau \parent{ \frac{\sigma^2}{2}\omega^2 + \frac{|\omega|^{2 - \lambda}}{4} \cl} } d\omega
& \leq \expfs{-\tau \frac{\varsigma^{2 - \lambda}}{4} \cl} \int_{\varsigma}^\infty \expfs{-\tau \frac{\sigma^2}{2}\omega^2} d\omega \\
& =\sqrt{\frac{2 \pi}{\tau \sigma^2}} \expfs{-\tau \frac{\varsigma^{2 - \lambda}}{4} \cl} \parent{1-\Phi(\varsigma\sqrt{\tau \sigma^2})}\notag
\end{align}
We have that $b(\varsigma)$, defined as the minimum of the right hand sides of the two previous equations,
\begin{align*}
b(\varsigma)=\min & \left\{ \varsigma \expfs{-\tau \frac{\sigma^2}{2}\varsigma^2}  I \parent{\frac{\tau \varsigma^{2-\lambda}\cl}{4},2-\lambda}, \sqrt{\frac{2 \pi}{\tau \sigma^2}} \expfs{-\tau \frac{\varsigma^{2 - \lambda}}{4} \cl} \parent{1-\Phi\parent{\varsigma\sqrt{\tau \sigma^2}}} \right\}
\end{align*}
is a bound for the integral. Bearing this in mind we have
%approximation:
\begin{align}
\int_{\R} \expfs{-\tau \parent{ \frac{\sigma^2}{2}\omega^2 + \frac{|\omega|^{2 - \lambda}}{4} \cl\indicator_{\absval{\omega}>1}} } d\omega
& \leq 2 \Phi\parent{\sqrt{\tau\sigma^2}}-1 + 2 b(1)
\end{align}
and
\begin{align}
\int_{\varsigma}^{\infty} \expfs{-\tau \parent{ \frac{\sigma^2}{2}\omega^2 + \frac{|\omega|^{2 - \lambda}}{4} \cl\indicator_{\absval{\omega}>1}} } d\omega
\leq & b\parent{\varsigma}
\end{align}
provided that $\varsigma\geq 1$.

\section{Computation and minimization of the bound}
\label{section:Numerics}

In this section, we present numerical examples on the
bound presented in the previous section using practical
models known from the literature.
We gauge the tightness of the bound compared to the
true error using both dissipative and pure-jump processes.
We also demonstrate the feasibility of using the expression of the bound
as a tool for choosing numerical parameters
for the Fourier inversion.

\subsection{Call option in variance gamma model}

The variance gamma model provides a test case
to evaluate the bound in the pure-jump setting. We note
that of the two numerical examples presented, it is the less regular one
in the sense that $\sigma^2=0$ and $C \parent{\lambda} = 0$ for $0 < \lambda < 2$,
indicating that Theorem \ref{teo:boundM} in particular is not applicable. 
%\marginpar{Do we really need this sentence? If so, what does it mean?
%$\exists ~ \tau^*>0: \charFun{\tau}{\cdot} \notin L_1\parent{\R} ~ \forall \tau<\tau^*$.
%It also is a case in which the characteristic function is not integrable for sufficiently short maturities, making approximation of the kind presented in (28) impossible. 
% }
 
The \levy measure of the VG model is given by:
\begin{align*} \label{eq:VG}
\nu_{\mathrm{VG}} \parent{\d y} 
= \d y \parent{
\mathbf{1}_{y>0} \frac{\cgmyC \expfs{-\cgmyM y}}{y}
- \mathbf{1}_{y<0} \frac{\cgmyC \expfs{\cgmyG y}}{y}
}
\end{align*}
and the corresponding characteristic function is given by
eq. (7) of \cite{madan1998variance}:
\begin{align*}
\charFun \tau \omega &= 
\parent{1 - i \theta \chi \omega + \frac{\sigma^2 \chi}{2}}^{-\frac{\tau}{\chi}}
\\
\cgmyC &= \chi^{-1}
\\
\cgmyG &= \parent{\sqrt{\frac{\theta^2 \chi^2}{4}+\frac{\sigma^2 \nu}{2}} -\frac{\theta \chi}{2}}^{-1}
\\
\cgmyM &= \parent{\sqrt{\frac{\theta^2 \chi^2}{4}+\frac{\sigma^2 \nu}{2}} +\frac{\theta \chi}{2}}^{-1}
\end{align*}

By Proposition \ref{prop:analy1} 
we get that 
\begin{align} 
a < \mathrm{min} \sset{\cgmyG-\damp,\cgmyM+\damp}
\end{align}
which, combined with the requirement that
$\ga \in L^1 \parent{\R}$ (cf, Remark \ref{remark:callPut}), implies:
\begin{align}
a &< \mathrm{min} \sset{\cgmyM-\damp,\cgmyG+\damp,\damp-1}
\\
a &< \mathrm{min} \sset{\cgmyM-\damp,\cgmyG+\damp,-\damp}
\nonumber
\end{align}
for calls and puts, respectively. We note that evaluation
of the integral in \eqref{eq:fadefinition} is possible also for $\damp \in (0,1)$
and for $\damp < 0$. In fact, there is a correspondence between shifts in the
integration countour and put-call parity. Integrals with $\damp<0$ give rise to put
option prices instead of calls. For an extended discussion of this, we refer to
\cite{lee2004option} or \cite{boyarchenko2011new}, in which 
conformal deformation of the
integration contour is exploited in order to achieve improved numerical accuracy.

In \cite{lee2004option} and in our calculations the parameters equal
$\cgmyM = 39.7840$, $\cgmyG = 20.2648$ and $\cgmyC=5.9311$.

\begin{figure}
\includegraphics[scale=1.0]{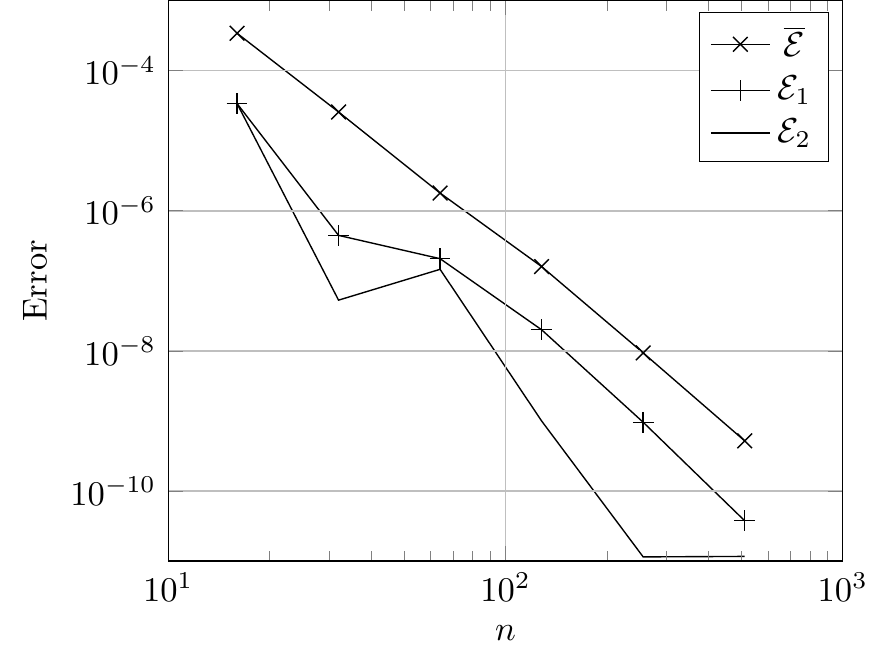}
\caption{\label{fig:vgerrvsbound}
The true error and the error bound for evaluating
at the money options for the VG model test case.
}
\end{figure}

Table \ref{tb:usvslee} presents the specific parameters and compares
the bound for the VG model with the results obtained by \cite{lee2004option}.
Based on the table, we note that for the VG model
presented in \cite{madan1998variance} we can achieve comparable or better error bounds when compared to the study
by Lee. 

To evaluate the bound, we perform the integration of 
\eqref{eq:defM} and \eqref{eq:truncationErrorIntegral}
by relying on the Clenshaw-Curtis quadrature method
provided in the SciPy package.
To supplement Table \ref{tb:usvslee} for a wide range
of $n$,
we present the magnitude of the bound compared to the true error in Figure
\ref{fig:vgerrvsbound}.

In Figure \ref{fig:vgerrvsbound}, we see that the choice of numerical parameters
for the Fourier inversion has a strong influence on the error of the numerical method.
One does not in general have access to the true solution. 
Thus, the parameters need to be optimised 
with respect to the bound.
Recall that $\Error = \Error(\alpha,\Delta \omega,a, n)$ and 
$\est = \est(\alpha,\Delta \omega, n)$ denote the true and estimated
errors, respectively. Keeping the number of quadrature points $n$ fixed,
we let $\parent{\alpha_1,\Delta \omega_1,a_1}$ and $\parent{\alpha_2,\Delta \omega_2}$
denote the minimisers of the estimated and true errors, respectively
\begin{align}
\parent{\alpha_1,\Delta \omega_1,a_1}  &= \mathrm{arg ~ inf}~ \est 
\label{eq:pdef1}
\\
\parent{\alpha_2,\Delta \omega_2}  &= \mathrm{arg ~ inf} ~ \Error 
\label{eq:pdef2}
\end{align}
We further let $\mathcal{E}_1$ and $\mathcal{E}_2$ 
denote the true error as a function of the parameters
minimising the estimated and the true error, respectively
\begin{align}
\mathcal E_1 &= \Error \parent{\damp_1,\step_1} 
\label{eq:pdef3}
\\
\mathcal E_2 &= \Error \parent{\damp_2,\step_2}  
\label{eq:pdef4}
\end{align}
In Figure \ref{fig:vgerrvsbound} we see that the true error increases by approximately an 
order of magnitude when optimising to the bound instead of to the true error,
translating into a two-fold difference
in the number of quadrature points needed for a given tolerance. 
The difference between ${\Error}_1$ 
and the bound is approximately another order of magnitude and necessitates
another two-fold number of quadrature points compared to the theoretical minimum.

\begin{figure}
\subfigure{
\includegraphics[scale=1.0]{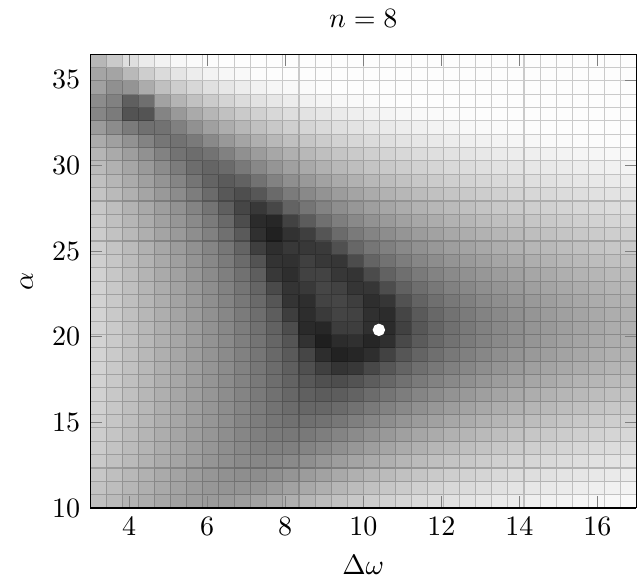}
}
\subfigure{
\includegraphics[scale=1.0]{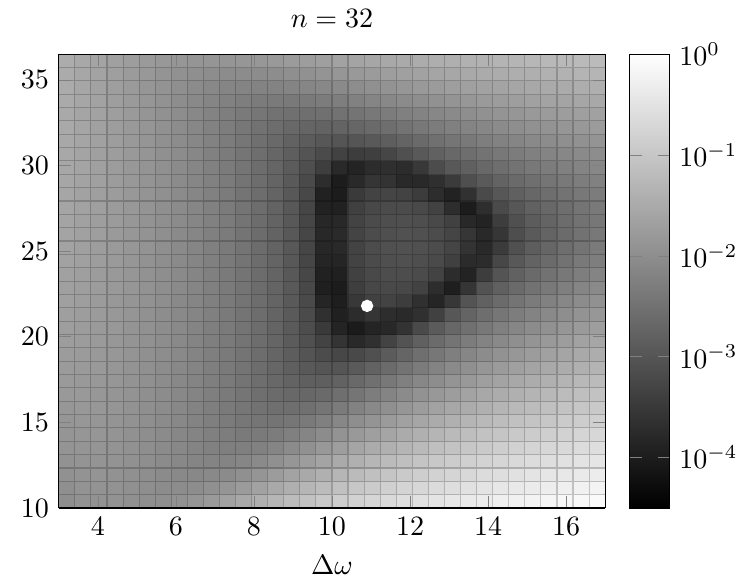}
}
\caption{\label{fig:parameterPlot}
The true error $\Error$
for the two VG test cases presented in Table \ref{tb:usvslee}
and the bound-minimising configurations (white circle) $\parent{\alpha_2,\Delta \omega_2} $
for the examples.
}
\end{figure}

In Figure \ref{fig:parameterPlot}, we present the true error
\footnote{The reference value to compute the true error was
obtained by the numerical methods with $n$ and $\step$ such
that the level of accuracy is of the order $10^{-10}$.} for the Fourier method
for the two test cases in Table \ref{tb:usvslee}. 
 We note that while minimising error
bounds will produce sub-optimal results, the numerical parameters that minimise
the bound are a good approximation of the true optimal parameters. 
This, of course,
is a consequence of the error bound having qualitatively similar behaviour as
the true error, especially as one gets further away from the true optimal parameters.

\begin{remark}
In practice, the Hardy norm in coefficient $M$ reduces to evaluating an
$L^1$ norm along the two boundaries of the strip of width $2a$.
We find that, for practical purposes, the performance of the
Clenshaw-Curtis quadrature of the
QUADPACK library provided by SciPy library is more than adequate,
enabling the evaluation of the bound in a fraction of a second.

For example, the evaluations of the bounds in Table
\ref{tb:usvslee} take only around $0.3$ seconds to evaluate
on Mid 2014 Macbook Pro equipped with a
2.6 GHz Intel Core i5 processor, this without attempting to optimize
or parallelize the implementation and while checking for input sanity
factors such as the evaluation of the characteristic function in a domain
that is a subset in the permitted strip.

We believe that through optimizing routines, skipping sanity checks for inputs
and using a lower-level computation routines this can be optimized even further,
guaranteeing a fast performance even when numerous evaluations are needed.

\end{remark}

\begin{table}
\begin{center}\begin{tabular}{ccccccc}
\hline
 & $K$ & $80$ & $90$ & $100$ & $110$ & $120$\\
\hline
\multirow{2}{*}{$12 \tau=1$} & $\damp$ & $ -16.9 $ & $ -13.8 $ & $ 21.6 $ & $ 29.10 $ & $ 36.3 $ \\ 
 & $a$ & $ 3.33 $ & $ 6.45 $ & $ 18.1 $ & $ 9.77 $ & $ 3.52 $ \\ 
\multirow{2}{*}{$N =32$} & $\cutoff$ & $ 229 $ & $ 229 $ & $ 363 $ & $ 363 $ & $ 424 $ \\ 
 & $\overline{\mathcal E}$ & $ 3.35 \times 10^{-4} $ & $ 0.00334 $ & $ 0.00562 $ & $ 3.97 \times 10^{-4} $ & $ 7.33 \times 10^{-6} $ \\ 
 & $\overline{\mathcal E}^*$ & $ 6 \times 10^{-4} $ & $ 0.0032 $ & $ 0.0058 $ & $ 6 \times 10^{-4} $ & $ 1 \times 10^{-4} $ \\ 
\hline
\multirow{2}{*}{$12 \tau=4$} & $\damp$ & $ -13.8 $ & $ -13.8 $ & $ 22.1 $ & $ 23.7 $ & $ 29.10 $ \\ 
 & $a$ & $ 6.11 $ & $ 6.11 $ & $ 17.9 $ & $ 15.2 $ & $ 8.75 $ \\ 
\multirow{2}{*}{$N =8$} & $\cutoff$ & $ 62.4 $ & $ 42.4 $ & $ 84.9 $ & $ 126 $ & $ 126 $ \\ 
 & $\overline{\mathcal E}$ & $ 3.99 \times 10^{-4} $ & $ 0.00312 $ & $ 0.00398 $ & $ 3.57 \times 10^{-4} $ & $ 1.33 \times 10^{-5} $ \\ 
& $\overline{\mathcal E}^*$ & $ 1.3 \times 10^{-3} $ & $ 0.0057 $ & $ 0.0055 $ & $ 9 \times 10^{-4} $ & $ 1 \times 10^{-4} $ \\

\hline
\end{tabular}\end{center}
\caption{\label{tb:usvslee}The error bound for European call/put options in the VG model for select examples.
Refererence result $\overline{\mathcal E}^*$ from \cite{lee2004option}}
\end{table}

\begin{remark}
Like many other authors, we note the exceptional \emph{guaranteed} accuracy of the FT-method,
with only dozens of quadrature points. This is partially a result of the regularity of the European option price.
Numerous Fourier-based methods have been developed for pricing path-dependent options.
One might, for the sake of generality of implementation, 
be tempted to use these methods for European options as well, correcting for the lack of early exercise
opportunities. This certainly can be done, but due to the
weakened regularity, the required numbers of quadrature points are easily in the thousands, even when no rigorous
bound for the error is required.

We raise one point of comparison,
the the European option pricing example in Table 2 of \cite{jackson2008fourier},
which indicates a number of quadrature points for pricing the option in the range of thousands.
With the method introduced, to guarantee $\est \approx 10^{-3}$,
even with no optimisation, $n=64$ turns out to be sufficient.
\end{remark}

\subsection{Call options under Kou dynamics}

\begin{table}
\begin{center}\begin{tabular}{ccccccc}
\hline
 & $K$ & $80$ & $90$ & $100$ & $110$ & $120$\\
\hline
 & $\overline{\mathcal E}$ & $ 2.67 \times 10^{-4} $ & $  3.49 \times 10^{-4} $ & $  4.43 \times 10^{-4} $ & $ 5.52 \times 10^{-4} $ & $ 6.77 \times 10^{-4} $ \\ 
  & $\damp$ & $ -1.57 $ & $  -1.57 $ & $  -1.57$ & $ -1.57 $ & $ -1.57$ \\ 
  & $\cutoff$ & $ 22.9 $ & $  22.8 $ & $  22.6$ & $ 22.5 $ & $22.4$ \\ 
\hline
  &$\overline{\mathcal E}^*$   & $ 0.34 $ & $  0.26 $ & $  0.21$ & $ 0.17 $ & $0.13$ \\ 
 & $\overline{\mathcal E}^{\dagger}$  & $ 6.87 \times 10^{-4} $ & $  1.90 \times 10^{-3} $ & $  2.82 \times 10^{-3} $ & $ 2.72 \times 10^{-3} $ & $ 2.29 \times 10^{-3} $ \\ 
 \hline
\end{tabular}
\end{center}
\caption{\label{tb:usvsleekou}
Numerical performance of the bound for the Kou model, with the test case in
\cite{toivanen2007numerical} (see also \cite{d2005robust}) with the number
of quadrature points set to $n=32$.
The point of comparison $\overline{\mathcal E}^*$ refers to the corresponding bound computed
with the method described in Chapters $6.1$ to $6.4$ of \cite{lee2004option}.
In the $\overline{\mathcal E}^{\dagger}$, the cutoff error has been evaluated using a 
computationally more intensive Clenshaw-Curtis quadrature instead of an asymptotic argument
with an exponentially decaying upper bound for the option price.
}
\end{table}

For contrast with the pure-jump process presented above, we also test the performance
of the bound for Kou model and present relevant results in \ref{tb:usvsleekou}. This model differs from the first example not only by being dissipative
but also in regularity, in the sense that the maximal width of the domain $A_a$ is, for the 
case at hand, considerably narrower. The \levy measure in the Kou model is given by
\begin{align*}
\nu_{\mathrm{Kou}} \parent{\d y} = \lambda \parent{ p \expfs{-\eta_1 y} \mathbf{1}_{y>0} + q \expfs{\eta_2 y} \mathbf{1}_{y<0} }
\end{align*}
with $p+q=1$.
For the characterisation given in \cite{toivanen2007numerical} the
values are set as
\begin{align*}
\lambda = 0.1, ~~ r = 0.05 ~~ \tau = 0.25 ~~ S_0 = 100 \\
p = 0.3445, ~~ \eta_1 = 3.0465 ~~ \eta_2 = 3.0775
\end{align*}
from the expression of the characteristic exponent (see  \cite{kou2004option})
\begin{align*}
\charExp{z} = z \parent{r -\frac{\sigma^2}{2} - \lambda \zeta} + \frac{z^2 \sigma^2}{2} 
+ \lambda \parent{\frac{p \eta_1}{\eta_1-z} + \frac{q \eta_2}{\eta_2 +z} -1}
\end{align*}
it is straightforward to see
\begin{align*}
A_a \subset \sset{z \in \C : \mathrm{Im} z \in ( -3.0465, 3.0775)}
\end{align*}
This range is considerably narrower than that considered earlier.
In the case of transforming the option prices in strike space, the
relevant expressions for option prices as well as the
error bounds contain a factor exponential in $k$. The practical implication
of this is that for deep out of the money calls, it is often beneficial to exploit
the put-call-parity and to compute deep in the money calls. However, in the case
at hand, the strip width does not permit such luxury. As a consequence,
the parameters that minimize the bound are near-identical through a wide range of moneyness,
suggesting use of FFT algorithm to evaluate the option prices at once for a range of strikes.

\subsection{Binary option in the Merton model}
For the particular case of Merton model, the \levy measure is given by
\begin{align*}
\nu_{\mathrm{Merton}} \parent{\d y}
=
\frac{\lambda}{\sqrt{2 \pi \sigma^2}} \expf{-\frac{\parent{y-r_j}^2}{2\sigma_j^2}}
\end{align*}
and the characteristic exponent correspondingly by
\begin{align*}
\charExp{z}_{\mathrm{Merton}}
=
z \parent{r-\frac{\sigma^2 z}{2}}
+ \frac{\sigma^2 z^2}{2}
+ \lambda
\parent{\expfs{z r_j + \frac{\sigma_j^2 z^2}{2}}-1 - z \parent{\expfs{r_j +\frac{\sigma_j^2}{2} }-1} }
\end{align*}
we may employ a fast semi-closed form evaluation
of the relevant integrals instead of resorting to quadrature methods.
We choose the Merton model as an example of bounding the error of the numerical method for such
a model. The parameters are adopted from 
the estimated parameters for S\&P 500 Index from 
\cite{andersen2000jump}:
\begin{align*}
S_0 = 100,
~~~
\lambda = 0.089,
~~
\sigma = 0.1765,
~~~
r = 0.05,
~~
r_j = -0.8898,
~~~
\sigma_j = 0.4505
\end{align*}

In Figure \ref{fig:mertonPlot}, we present the bound and true error for the Merton model
to demonstrate the bound on another dissipative model. The option presented is a binary option
with finite support on $[95,105]$; no damping was needed or used. We note that
like in the case of the pure-jump module presented above, our bound reproduces the
qualitative behaviour of the true error. The configuration resulting from optimising the bound
is a good approximation of the true error. Such behaviour is consistent through the range of
$n$ of the most practical relevance.

\begin{figure}
\subfigure{
\includegraphics[scale=1.0]{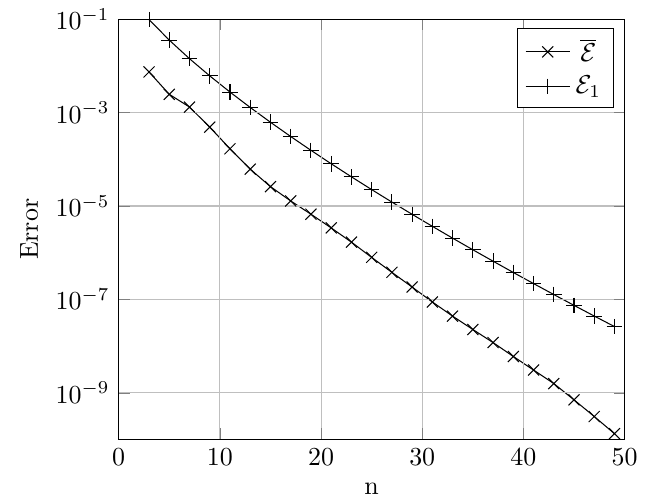}
}
\subfigure{
\includegraphics[scale=1.0]{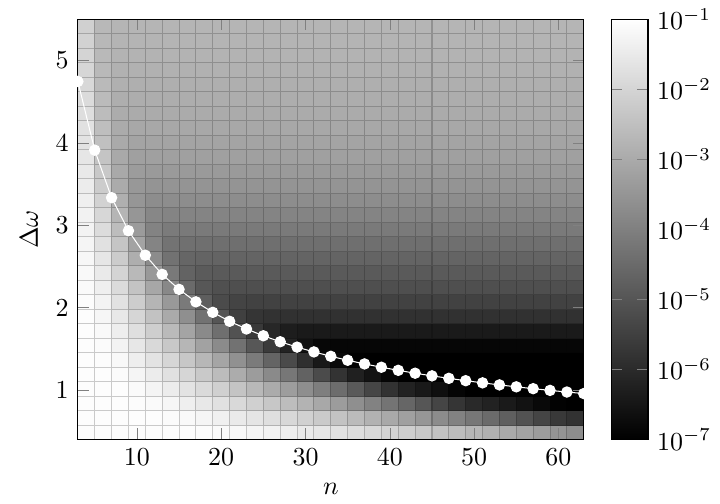}
}
\caption{\label{fig:mertonPlot}
The true error ${\Error}_1$
and the bound  $\est$
for the dissipative Merton model, for a range of
quadrature points $n$, along with the
bound-minimising configurations contrasted to the
true error.
}
\end{figure}
\subsection{Call options}
\label{subsec:call}
In Subsection \ref{ss:bounds} explicit expressions to bound $\Error$ are provided. 
To evaluate these bounds it is necessary to compute $\norminf{\ft{\ga}}{\R}$ and $\norminf{\ft{\ga}}{\strip{a}}$. According to Remark \ref{rem:decuople}, once
we compute these values we could use them for any model, provided that 
they satisfy the conditions considered there.

The payoff of perhaps the most practical relevance is that of
a call option. 
Consider $g$ defined by:
\begin{align*}
 g\parent{x} = \parent{S_0 \expfs{x}-K}^+ = S_0 \parent{\expfs{x}-\expfs{k}}^+
\end{align*}
for which a selection of a damping parameter $\damp>0$ is necessary
to have the damped payoff in $L^1 \parent{\R}$ and to ensure the existence of a
Fourier transformation. In this case we have% \cite{lordfourier}:
\begin{align}
\hat{\ga}
%\fcallga
\parent{\omega} &= S_0 \int_\R \expf{ \parent{1-\damp + i \omega}x} - \expf{k+\parent{i \omega- \damp}x}
\d x
\\
 %&= S_0 \evaldiff{ 
 %\frac{\expf{\parent{1-\damp + i \omega}x}}{1+ i \omega - \damp}
%-
 %\frac{\expf{k + \parent{i \omega - \damp}x}}{i \omega - \damp}
   %}{k}{\infty}
   %\\
   &=
   \frac{ S_0 \expf{\parent{1- \damp + i \omega} k } }{\parent{1 + i \omega- \damp} \parent{i \omega - \damp}}
%\label{eq:dampedghatcall}
%%%TODO: check!!!!
\end{align}
and
\begin{align}
 \absval{\ft{\ga}(\omega) }^2
 =
 \frac{S_0^2 \expfs{2 \parent{1-\alpha}k}}{\parent{\alpha^2 + \omega^2 } \parent{\parent{1-\alpha}^2 + \omega^2}}
\end{align}
It is easy to see that the previous expression decreases as $|\omega|$ increase. 
This yields
\begin{align}
\norminf{\ft{\ga}}{\R}
=\absval{\ft{\ga}(0) } =
\frac{S_0 \expfs{\parent{1-\alpha}k}}{\alpha^2-\alpha}
\end{align} 
and 
\begin{align}
\norminf{\ft{\ga}}{[\varsigma,\infty)}
=\absval{\ft{\ga}(\varsigma) }
\end{align} 
The maximisation of $\absval{\ft \ga}$ in the strip $\strip{a}$ of the complex plane is more subtle.
%illustration of the magnitude of the term is plotted in Figure \ref{fig:gHatSup}. 
Denoting $\ft \ga(\eta,\rho)=\ft \ga(\eta + i \rho)$, we look for critical points that
satisfy 
% Local extrema of $\absval{\ft \ga}$, denoting $\omega = \eta + i \rho$ are characterised
$\partial_{\eta} \absval{\ft \ga}=0$. 
This gives
\begin{align}
4 \eta^3 + 2 \eta \parent{4 \rho \alpha + 2 \alpha^2 - 2 \rho -2 \alpha + \rho^2 +1}=0.
\end{align}
For $\rho$ fixed, $\absval{\ft \ga}$ has a vanishing derivative with respect to $\eta$
at a maximum of three points. Of the three roots of the derivative, only the one characterised
by $\eta=0$ is a local maximum, giving us that for call options
\begin{align}
\norminf{\ft{\ga}}{\strip{a}} = \sup_{\rho \in [-a,a]} \absval{\ft \ga(0,\rho)} 
\label{eq:complexsup}
\end{align}
Now, observe that $\absval{\ft \ga(0,\rho)}$ is a differentiable real function of $\rho$, whose derivative is given by the following polynomial of second degree:
\begin{align}
 p \parent{\rho} \equiv k \parent{\rho +\alpha - 2 \rho \alpha - \alpha^2 - \rho^2}
 - 2 \alpha - 2 \rho +1
\end{align}
We conclude that
\begin{align}
\norminf{\ft{\ga}}{\strip{a}} = \max_{\rho \in B} \left\{ \absval{\ft \ga(0,\rho)}\right\} 
\label{eq:complexsup2}
\end{align}
where $B$ is the set of no more than four elements consisting of $a$; $-a$; and the real roots of $p$ that fall in $(-a,a)$.

\begin{remark}
So far, we have assumed the number of quadrature points $n$ to be constant.
In real life applications, however, this is often not the case. Typically the user might want to choose a minimal
$n$ that is sufficient to guarantee error that lies within a pre-defined error tolerance.

In such a case, we propose the following, very simplistic scheme for optimising numerical parameters
and choosing the appropriate $n$ to satisfy error smaller than $\epsilon$:
\begin{enumerate}
\item Select $n=n_0$ and optimize to find the relevant configuration
\item See if $\EQ + \EF <\epsilon$, if not, increase $n$ by choosing it from a pre-determined
increasing sequence $n=n_j$ and repeat procedure.
\end{enumerate}
Especially in using FFT algorithms to evaluate the Fourier transforms,
we propose $n_j = 2^j n_0$. We further note that typically the optimal configuration
for the optimizing configuration for $n_{j+1}$ quadrature points does not differ too
dramatically from the configuration that optimizes bounds for $n_j$.
\end{remark}

\section{Conclusion}
\label{section:Conclusion}

We have presented a decomposition of the error committed in
the numerical evaluation of the inverse Fourier transform needed
in asset pricing for exponential \levy models into truncation and
quadrature errors. For a wide class of exponential \levy models, we
have presented an $L^\infty$-bound for the error.

The error bound differs from the earlier work presented in \cite{lee2004option}
in the sense that it does not rely on the asymptotic behaviour of the option
payoff at extreme strikes or option prices, allowing pricing a wide variety
of non-standard payoff functions such as the ones in \cite{suh2008class}.
The bound, however, does not take into account path-dependent options. We
argue that the error for the methods that allow evaluating american, bermudan,
or knockoff options are considerably more cumbersome and produce significantly
larger errors so that in implementations where performance is important, such as
calibration, using American option pricing methods for European options is not justified.

The bound also provides a general framework in which
the truncation error is evaluated using a quadrature method that remains invariant
regardless of the asymptotic behaviour of the option price function.
The structure of the bound allows for
a modular implementation that decomposes the error components arising from
the dynamics of the system and the payoff into a product form for a large
class of models, including all dissipative models. On select examples, we
also demonstrate the performance that is comparable or superior to the
relevant points of comparison.

We have focused on the minimization of the bound as a proxy for minimizing numerical error.
Doing this, one obtains, for a given parametrization of a model, a rigorous $L^\infty$
bound for the error committed in solving the European option price. We have shown that the 
bound reproduces the qualitative behaviour of the actual error. This supports the argument for
selecting numerical parameters in a way that minimizes the bound, giving evidence that
this selection will, besides guaranteeing numerical precision, be close to the actual minimizing
configuration that is not often achievable at an acceptable computational cost.

The bound can be used in the primitive setting of establishing a strict error
bound for the numerical estimation of option prices for a given set of physical and
numerical parameters or as a part of a numerical
scheme, whereby the end user wishes to estimate an option price either on
a single point or in a domain up to a predetermined error tolerance.

In the future, the error bounds presented can be used in efforts requiring
multiple evaluations of Fourier transformations. Examples of such applications
include multi-dimensional Fourier transformations, possibly in sparse tensor grids,
as well as time-stepping algorithms for American and Bermudan options.
Such applications are sensitive towards the error bound
being used, as any numerical scheme will be required to run multiple times, either
in high dimension or for multiple time steps (or both).

\section{Acknowledgements}

H\"app\"ol\"a, Crocce are members and Tempone is the director KAUST Strategic Research Initiative
for Uncertainty quantification.

We wish to thank an anonymous referee for their critique that significantly improved the quality of the manuscript.

\section{Declarations of Interest}

The authors report no conflicts of interest. The authors alone are responsible for the content and writing of the paper.

\appendix

\section{Truncation bound in Lee's scheme}

For the strike-space transformed option price we have
\begin{align*}
\tilde \fa \parent{\omega}
=
\frac{\charFun 1 {\omega - i \parent{\damp + 1}}}{\parent{i \omega + \damp} \parent{i \omega + \damp + 1}}
\end{align*}
In the particular case of the Kou model, we have 
\begin{align*}
\charFun 1 \omega
=
\expfs{\tau
\parent{
i \omega \parent{r - \frac{\sigma^2}{2} - \lambda \zeta}
- \frac{\omega^2 \sigma^2}{2}
+ \lambda
\parent{
\frac{p \eta_1 }{\eta_1 - i \omega}
+\frac{q \eta_2}{\eta_2 + i \omega}
-1
}
}
}
\end{align*}
Thus,
\begin{align*}
\absval{ \charFun 1\omega}
=
\expfs{
\tau
\parent{ 
\parent{\damp+1}
\parent{r - \frac{\sigma^2}{2} - \lambda \zeta}
+
\frac{\parent{\damp+1}^2 - \omega^2}{2}
}
}
\expfs{ \tau \lambda \mathrm{Re} \parent{
\frac{p \eta_1}{\eta_1-\damp-1 - i \omega}
\frac{q \eta_2}{\eta_2 +\damp +1 + i \omega}
-1
 }}
\end{align*}
We bound the first jump term resulting from upward jumps as
\begin{align*}
\frac{p \eta_1}{ \eta_1 - \damp -1 - i \omega}
= \frac{p \eta_1 \parent{\eta_1 - \damp -1 +i \omega}}{\parent{\eta_1 - \damp -1}^2 + \omega^2}
\end{align*}
from which it follows
\begin{align*}
\mathrm{Re}
\parent{
\frac{p \eta_1}{ \eta_1 - \damp -1 - i \omega}
}
\leq
\frac{p \eta_1 \parent{\eta_1 - \damp -1 }}{\parent{\eta_1 - \damp -1}^2 }
\end{align*}
similar inequality holds for the term resulting from downward jumps giving us
that
\begin{align*}
\absval{ \charFun 1 \omega}
\leq 
\expfs{
\tau
\parent{ 
\parent{\damp+1}
\parent{r - \frac{\sigma^2}{2} - \lambda \zeta}
+
\frac{ \sigma^2 \parent{\damp+1}^2 - \sigma^2 \omega^2}{2}
}
}
\expfs{
\tau \lambda
\parent{
\frac{p \eta_1 \parent{\eta_1 - \damp -1}}{ \parent{\eta_1 -\damp -1}^2}
+
\frac{q \eta_2 \parent{\eta_2 + \damp +1}}{\parent{\eta_2+ \damp +1}^2}
+1
}
}
\end{align*}
The term $\expf{- \frac{\omega^2}{2}}$ can be bounded from above
by an exponential term $\expf{-\gamma \omega}$:
\begin{align}
\expf{- \frac{\sigma^2 \omega^2}{2}}
\label{eq:tailBound}
& \leq
\expf{-\gamma \omega+ K}
\\
- \sigma^2 \omega^2 + 2 \gamma \omega - 2 K & \leq 0
\nonumber
\end{align}
the condition for the quadratic to have no solutions
is given by
\begin{align*}
 \gamma^2 - 2 K  \sigma^2 \leq 0,
\end{align*}
setting $K =\frac{1}{2}$, $\gamma = \sigma$
giving us the bound in eq. \eqref{eq:tailBound}.

This puts us in place to apply theorem 6.1 of \cite{lee2004option}, with
\begin{align*}
\Phi \parent{u} &= 
\frac{
\expf{
\tau
\parent{ 
\parent{\damp+1}
\parent{r - \frac{\sigma^2}{2} - \lambda \zeta}
+
\frac{ \sigma^2 \parent{\damp+1}^2 }{2}
}
}
\expfs{
\tau \lambda
\parent{
\frac{p \eta_1 \parent{\eta_1 - \damp -1}}{ \parent{\eta_1 -\damp -1}^2}
+
\frac{q \eta_2 \parent{\eta_2 + \damp +1}}{\parent{\eta_2+ \damp +1}^2}
+1
}
}}
{\damp^2 \parent{\damp+1}^2}
\\
\gamma & = \sigma
\end{align*}

\end{document}